\newtheorem{theorem}{Theorem}
\newtheorem{corollary}[theorem]{Corollary}
\newtheorem{definition}[theorem]{Definition}
\newtheorem{lemma}[theorem]{Lemma}
\newtheorem{proposition}[theorem]{Proposition}
\newtheorem{remark}[theorem]{Remark}
\newtheorem{example}[theorem]{Example}
\newenvironment{proof}[1][Proof]{\textbf{#1.} }{\ \rule{0.5em}{0.5em}}
  \newcommand{\cut}[1]{}
  \newcommand{\vcut}[1]{}
 \newcommand{\pcut}[1]{}
\newcommand{\vnote}[1]{{\ \color{red}\bf{}{VG: #1}}}
 \newcommand{\solc}{\ensuremath{\mathfrak{S}}\xspace}
 \newcommand{\sol}[1]{\mathfrak{S}(#1)}
 \newcommand{\val}{\mathsf{v}}
\newcommand{\turn}{\ensuremath{\mathsf{turn}}}
 \newcommand{\bfg}{\ensuremath{\mathbf{g}}}
 \newcommand{\bfu}{\ensuremath{\mathbf{u}}}
 \newcommand{\act}{\ensuremath{\mathcal{A}}}
  \newcommand{\hist}{\ensuremath{\mathcal{H}}}
  \newcommand{\fhist}{\ensuremath{\mathcal{H}^{f}}}  
  \newcommand{\trmhist}{\ensuremath{\mathcal{H}^{t}}}    
  \newcommand{\N}{\ensuremath{\mathrm{N}}}
  \newcommand{\out}{\mathsf{out}}
\begin{document}
\title{Non-cooperative games with preplay negotiations} 
\author[1]{Valentin Goranko}
\author[2]{Paolo Turrini}

\affil[1]{\small{Department of Applied Mathematics and Computer Science, Technical University of Denmark}} 
\affil[2]{\small{Department of Computing, Imperial College London}}
%\date{}
\maketitle

\begin{abstract}
We consider an extension of  strategic normal form games with a phase of negotiations before the actual play of the game, where players can make binding offers for transfer of utilities to other players after the play of the game, in order to provide additional incentives for each other to play designated strategies. The enforcement of such offers is conditional on the recipients playing the specified strategies and they effect transformations of the payoff matrix of the game by accordingly transferring payoffs between players. Players can exchange series of such offers in a preplay negotiation game in an extensive form. We introduce and analyze solution concepts for %2-player
 normal form games with such preplay offers under various assumptions for the preplay negotiation phase and obtain results for existence of efficient negotiation strategies of the players. 
%VG removed
%Then we extend the framework to coalitional preplay offers in $N$-player games, as well as to extensive form games with inter-play offers for side payments. 
\end{abstract}

\section{Introduction}\label{sec:intro}

It is well known that some normal form games have no pure strategy Nash equilibria, while others, like the Prisoner's Dilemma, have rather unsatisfactory -- e.g., strongly Pareto dominated -- ones. These inefficiencies are often attributed to the lack of communication between the players and the impossibility for them to agree on a mutually beneficial joint course of action, before the actual play of the game. Indeed, undesirable outcomes could often be avoided if players were able to communicate and make binding agreements on the strategy to play before the game starts, by  signing contracts. However, even if players could freely communicate before the game, enforcing of such contracts is often not possible in practice and, furthermore, it would change the nature of the game from non-cooperative to essentially cooperative.

\medskip 

Here we consider a weaker and generally more realistic assumption, viz.: 

\begin{quote}
 \emph{Before the actual game is played any player, say $A$, can make a binding 
offer to any other player, say $B$, to pay him\footnote{We refer to player $A$ 
as a female, while to $B$ as a male. This choice is not for the sake of political correctness but to make it easier to distinguish the players from the context.}, after the game is played, an explicitly declared amount of utility $\delta$ if $B$ plays a strategy $s$ specified in the offer by $A$.}
\end{quote}

Building up upon this basic, {\em unconditional}, form of offer, we also consider a more general setting, where players, without acting as a coalition, can propose a game transformation to their fellow players, by making an offer under the condition of receiving another offer in return, proposal that can, in turn, be accepted or rejected. This newly obtained game transformation can be further modified, with proposals made by other players, until an agreement is reached. When endowing players with the possibility of playing such extra pre-play moves, a whole {\em bargaining phase} emerges before a normal form game is actually played. In other words, we can think of the normal form game that is eventually played as an outcome of another game, played beforehand, in which players engage in exchanging offers on strategies of other players until an agreement is reached on the game to play.  

Introducing an extensive-form bargaining structure preceding the play of a normal form game  is relevant  and important for the analysis of a wide spectrum of economic, social and political situations, such as:  
\vcut{
%short: 
\emph{corruption schemes}, involving bribes in exchange of illegal favors; 
\emph{collusions} between two or more parties in an economic activity by exchanging `behind the curtain' agreements for mutual incentives; 
 \emph{kickback schemes} and other quasi-legal incentives;  
political, labour-related or business \emph{compromises} between non-cooperative parties, 
 \emph{compensations, concessions, out-of-court settlements of legal cases}, etc. 
}
%\vcut{
\begin{itemize}
\itemsep = -2pt
\item \emph{corruption schemes} involving bribes in exchange of illegal favors; 
\item \emph{collusions} between two or more parties in an economic activity, by exchanging `behind the curtain' agreements for mutual incentives.
\item \emph{kickback schemes} and other quasi-legal incentives, 
\item political, labour-related or business \emph{negotiations and compromises} between 
non-cooperative parties, 
\item \emph{compensations, concessions, out-of-court settlements of legal cases}, etc.
\end{itemize}
%}
For further details and discussions of these kinds of scenarios see for instance in \cite{gut1, gut2, schelling, segal}.

We note that agreements in such economic and political negotiations are usually reached in dynamic bargaining processes made of offers and counteroffers, rather than a one-shot simultaneous proposal ending the talks.
%, as e.g., in \cite{JW05, EP11}.

The literature in economic theory abounds with examples of parties entering negotiations to overcome inefficient resource allocation, as well as schemes of side payments, compensatory mechanisms, etc., which we review in detail in Section \ref{sec:related}. Here we only mention 
%the most relevant ones to the current study of which we are aware:  Rubinstein's model of bargaining games among rational decision-makers \cite{OR90,OR} and 
some more recent studies of pre-play contracting in games that consider one-shot simultaneous, in \cite{JW05}, \cite{EP11}, or two-step, in \cite{yamada-contracts}, offers preceding the actual game play and conditional on the entire strategy profile (see discussion in Section \ref{sec:related}). Somewhat surprisingly, however, a systematic study of the extensive-form negotiation process preceding the actual game play seems still to be missing in the literature. 
With this paper we initiate such systematic study purporting to fill this gap, by formalizing and studying the negotiation process preceding the actual game play as a bargaining among the players on the game to play, thus drawing connections with modern bargaining theory, in particular, Rubinstein's model of bargaining games \cite{OR90,OR}.
The paper is intended as a research `manifesto' in which we introduce and discuss conceptually our framework and outline a long term research agenda on it. In particular, we discuss our framework in more detail in Section \ref{sec:fram}, illustrate and discuss preplay offers and offer-induced game transformations in Section \ref{sec:trans} and introduce normal form games with preplay negotiations phase in Section \ref
{sec:preplay}. Then we analyze the case of 2-player normal form games with preplay offers with unconditional offers under various assumptions for the preplay negotiation phase in Section \ref{sec:2-player-uncond} and then we analyze the case with conditional offers in Section \ref{sec:2-player-cond}, where we obtain results for existence of efficient negotiation strategies of both players, significantly extending our work in \cite{GT-LORI2013}. 
%We extend and briefly discuss the framework to coalitional preplay offers and negotiations in $N$-player games, as well as to extensive form games with inter-play offers for side payments in Section \ref{sec:ext}. 
 We end the paper with discussion of related work in Section \ref{sec:related}
 and concluding remarks and directions for further study  in Section \ref{sec:conc}. 

%The ideas presented here will be developed in detail in a series of 
%technical companions starting with \cite{VG-GT} and \cite{Preplay-tech1}.  

\section{Non-cooperative games with preplay offers: \\ the conceptual framework} 
\label{sec:fram}

In this section we provide a more detailed description of preplay offers, discuss some motivating examples, and lay down several extra conditions that play a role in determining the outcome of the negotiation phase.

\subsection{Preplay offers in more detail}
 We assume that any preplay offer by $A$ to $B$ is \emph{binding for $A$}, conditional on $B$ playing the strategy $s$ specified by $A$\footnote{We will not discuss here the mechanism securing the payments of the preplay offers after the play if the conditions are met. That can be done by a legal contract, or by using a trusted third party, etc.}. 
 However, such offer \emph{does not} create any obligation for $B$ and therefore it does not transform the game into a cooperative one, for $B$ is still at liberty to choose his strategy when the game is actually played. In particular, after her offer $A$ does not know before the game is played whether $B$ will play the desired by $A$ strategy $s$, and will thus make use of the offer, or not. Furthermore, several such offers can be made, possibly by different players, so the possible rational behaviours of the payers game maintain, in principle, all their complexity.  The key observation applying to this assumption, is that \emph{after any binding preplay offer is made, the game remains a standard non-cooperative normal form game, only the payoff matrix changes according to the offer.}

%Several previous studies have considered similar or related frameworks, 
%incl. (chronologically) \cite{ros1,gut1,kal1,gut2,DS91,var1,var2,MT06} and 
%most notably the more recent \cite{JW05, EP11}. We discuss and compare related work in more detail in Section \ref{sec:related}.

\subsection{Motivating examples}\label{sec:motivating}

First, we introduce the following notation: $A \xlongrightarrow{\delta \slash  \sigma_B} B$ denotes an offer made by player $A$ to pay an amount $\delta$ to player $B$ after the play of the game if player $B$ plays strategy $\sigma_B$.  

\paragraph{Prisoners' Dilemma 1}
Consider a standard version of the Prisoner' s Dilemma (PD) game in Figure \ref{PD}.
\begin{figure}[htb]\hspace*{\fill}%
\begin{game}{2}{2}
      & $C$    & $D$\\
$C$   &$4,4$   &$0,5$\\
$D$   &$5,0$   &$1,1$
\end{game}\hspace*{\fill}%
\caption{Prisoner's Dilemma 1}
\label{PD}
\end{figure}
The only Nash Equilibrium (NE) of the game is $(D,D)$, yielding a payoff of $(1,1)$. Now, suppose 
$\mathit{Row} \xlongrightarrow{2 \slash  C} \mathit{Column}$, that is, 
player $\mathit{Row}$ makes to the player $\mathit{Column}$ 
a \emph{binding offer} to pay her 2 units of utility (hereafter, \emph{utils}) after the game if $\mathit{Column}$ plays $C$. 
That offer transforms the game by transferring 2 utils from the payoff of $\mathit{Row}$ to the payoff of $\mathit{Column}$ in every entry of the column where $\mathit{Column}$ plays $C$, as pictured in Figure \ref{PDoffer}.

\begin{figure}[htb]\hspace*{\fill}%
\begin{game}{2}{2}
      & $C$    & $D$\\
$C$   &$2,6$   &$0,5$\\
$D$   &$3,2$   &$1,1$
\end{game}\hspace*{\fill}%
\caption{An offer to cooperate by player Row.}
\label{PDoffer}
\end{figure}

In this game player $\mathit{Row}$ still has the incentive\footnote{Intuitively, {\em having the incentive} to play a strategy should be understood as realizing that that strategy is not dominated. Later on we will provide a formal and abstract notion of equilibrium, which will rule out dominated strategies to be part of the solution of a game.} to play $D$, which strictly dominates $C$ for him, but the dominant strategy for $\mathit{Column}$ now is $C$, and thus the only Nash equilibrium is $(D,C)$ with payoff $(3,2)$ -- strictly dominating the original payoff $(1,1)$.

Thus, even though player $\mathit{Row}$ will still defect, the offer he has made to player $\mathit{Column}$ makes it strictly better for $\mathit{Column}$ to cooperate.

Of course, $\mathit{Column}$ can now realize that if player $\mathit{Row}$ is to cooperate,
then $\mathit{Column}$ would be even better off, but for that an extra incentive for $\mathit{Row}$ is needed. That incentive can be created by an offer  $\mathit{Column} \xlongrightarrow{2 \slash  C} \mathit{Row}$, that is, 
if $\mathit{Column}$, too, makes an offer to $\mathit{Row}$ to pay him 2 utils after the game, if player $\mathit{Row}$ cooperates. Then the game transforms, as in Figure \ref{PDoffer2}.

\begin{figure}[htb]\hspace*{\fill}%
\begin{game}{2}{2}
      & $C$    & $D$\\
$C$   &$4,4$   &$2,3$\\
$D$   &$3,2$   &$1,1$
\end{game}\hspace*{\fill}%
\caption{A second offer, by player Column.}
\label{PDoffer2}
\end{figure}

In this game, the only Nash equilibrium is $(C,C)$ with payoff $(4,4)$, which is also Pareto optimal. Note that this is the same payoff for $(C,C)$ as in the original PD game, but now both players have created incentives for their opponents to cooperate, and have thus escaped from the trap of the original inefficient Nash equilibrium $(D,D)$.

\begin{remark}
Clearly, preplay offers can only work in case when at least part of the received payoff can actually be transferred from a player to another. They obviously cannot apply to scenarios such as the original PD, where one prisoner cannot offer to the other to stay in prison for him, even if they could communicate before the play. 
\end{remark}

\paragraph{Prisoners' Dilemma 2}
\label{example:PD2}
%\vnote{Changed the example} 
Consider another version of the Prisoner' s Dilemma game in Figure \ref{PD2}.
\begin{figure}[htb]\hspace*{\fill}%
\begin{game}{2}{2}
      & $C_{Col}$   & $D_{Col}$ \\ 
$C_{Row}$   &$4,4$   &$0,5$\\
$D_{Row}$   &$5,0$   &$3,3$
\end{game}\hspace*{\fill}%
\caption{Prisoner's Dilemma 2}
\label{PD2}
\end{figure}
The only Nash Equilibrium in this game is  $(D_{Row},D_{Col})$, yielding the Pareto dominated payoff of $(3,3)$. Now, note that none of the players can make a feasible first offer to improve the outcome. Indeed, in order to provide a sufficient incentive for $\mathit{Column}$ to play $C_{Col}$,  
$\mathit{Row}$ would have to offer him more than 3, which is unfeasible for $\mathit{Row}$ because it would put him in a disadvantaged position. Likewise for $\mathit{Column}$. 
 
Thus, by consecutive exchange of unilateral preplay offers rational players cannot realize the opportunity to play the Pareto optimal outcome 
$(C_{Row},C_{Col})$. 

This problem can be avoided if we allow \emph{conditional offers} as follows: $\mathit{Row}$  can make an offer $\mathit{Row} \xlongrightarrow{3 \slash  C_{Col}} \mathit{Column}$, but now, \emph{conditional on 
$\mathit{Column}$ making to $\mathit{Row}$  the matching counter-offer}  
$\mathit{Column} \xlongrightarrow{3 \slash  C_{Row}} \mathit{Row}$, which we hereafter denote as $\mathit{Row} \xlongrightarrow{3 \slash  C_{Col} \ \mid \ 3 \slash  C_{Row}} \mathit{Row}$. The idea is that, unlike the so far considered unconditional offers, 
$\mathit{Row}$ 's conditional offer is only confirmed and enforced if 
$\mathit{Row}$ does make the required counter-offer, else it is cancelled and nullified before the play of the game.

%%%%%%
\vcut{
\paragraph{Battle of the Sexes}
\label{example:BoS}
Consider now a typical instance of the Battle of the Sexes (Figure \ref{BoS}), where we call  the column player {\em  Him} and the row player  {\em Her}. 
\begin{figure}[htb]\hspace*{\fill}%
\begin{game}{2}{2}
Her/Him    & $Ballet$    & $Soccer$\\
$Ballet$   &$4,2$   &$1,1$\\
$Soccer$   &$0,0$   &$2,4$
\end{game}\hspace*{\fill}%
\caption{The Battle of the Sexes.}
\label{BoS}
\end{figure}
The game has two NE: one preferred by Her: $(Ballet,Ballet)$, and 
the other -- by {\em Him}: $(Soccer,Soccer)$.

\medskip
An offer  $Him \xlongrightarrow{1 \slash  Soccer} Her$ from Him to Her transforms the game to the one in Figure \ref{BoS-femminist} which is biased in favour of Her. 
\begin{figure}[htb]\hspace*{\fill}%
\begin{game}{2}{2}
      & $Ballet$    & $Soccer$\\
$Ballet$   &$4,2$   &$1,1$\\
$Soccer$   &$1,-1$   &$3,3$
\end{game}\hspace*{\fill}%
\caption{The Battle of the Sexes, transformed by an offer by {\em Him} favouring {\em Her}.}
\label{BoS-femminist}
\end{figure}

That bias, however, can be neutralized by a `matching counter-offer' $Her \xlongrightarrow{1 \slash  Ballet} Him$,
which transforms the game to the one in Figure \ref{BoS-second}.

\begin{figure}[htb]\hspace*{\fill}%
\begin{game}{2}{2}
      & $Ballet$    & $Soccer$\\
$Ballet$   &$3,3$   &$1,1$\\
$Soccer$   &$0,0$   &$3,3$
\end{game}\hspace*{\fill}%
\caption{The Battle of the Sexes, further transformed by a matching `counter-offer'.}
\label{BoS-second}
\end{figure}
Both NE profiles yield the same payoffs here and, besides, they are both  Pareto optimal and 'fair' for both parties. Yet, because of the symmetry, the question of \emph{which Nash equilibria to choose} remains. That symmetry could be broken if a player is able to {\em signal} to the other player the strategy he would be actually playing. In this setting a signal from {\em Him} to {\em Her} can be realized as a further \emph{vacuous offer}  for payment of 0 made by any of the players, e.g.  
$Him \xlongrightarrow{0 \slash  Soccer} Her$ or $Him \xlongrightarrow{0 \slash  Ballet} Her$, which does not change the payoff matrix but only serves to indicate to the other player for which of the two equivalent Nash equilibria to play. 
 \medskip
 
Note, that the unilaterally made initial offer by {\em Him} to {\em Her} was a self-sacrificing move that put {\em Him} in a relatively disadvantaged position. 
As we will see further, that situation can make it non-beneficial for either of the players to make a first offer, even though they together could eventually achieve an improvement of both payoffs compensating the initial sacrifice. This problem can be avoided if we allow \emph{conditional offers} as follows: Him can make an offer $Him \xlongrightarrow{1 \slash  Soccer} Her$, but now, \emph{conditional on {Her} making to {Him} the matching counter-offer}  $Her \xlongrightarrow{1 \slash  Ballet} Him$, which we hereafter denote as 
$Him \xlongrightarrow{1 \slash  Soccer \ \mid \ 1 \slash  Ballet} Her$. The idea is that, unlike the so far considered unconditional offers, {\em Him}'s conditional offer is only confirmed and enforced if {\em Her} does make the required counter-offer, else it is cancelled and nullified before the play of the game. 
%However, once the conditional offer is met by the requested counter-offer, that does not constitute a contract between the players turning the game into a cooperative one, but only a set of two unilateral offers, each binding only its proposer. 
}
%%%%%

We will introduce formally and discuss conditional offers in detail further. 

\subsection{Additional optional assumptions}

There are several important additional assumptions that, depending on the particular scenarios under investigation may, or may not, be realistically made. We therefore do not commit to any of them, but we acknowledge that each of them can make a significant difference in the behaviour and abilities of players to steer the game in the best possible direction for them. So, we consider the possible options for each of them separately and study the consequences under the various combinations of assumptions.

\begin{description}
\item[Revocability of offers.]
Once made, offers may, or may not, be withdrawn during the negotiations phase. Both cases are reasonable and realistic, and we consider each of them separately.

\item[Value of time.] Time, measured discretely as the number of explicitly defined steps/rounds of the negotiations,  may or may not have value, i.e. players may, or may not, strictly prefer a reward in the present to the same reward in the future. Moreover, time may have the same value for all players, or may be more, or less, valuable for each of them depending on their {\em patience}.  
\begin{itemize}
\itemsep = -2pt
\item In the case when time is of no value, players can keep making and withdrawing offers (if allowed to do so) at no extra cost. Intuitively the effect should be the same as if withdrawn offers were never made. 

\item In the case when time is of value, making unacceptable or suboptimal offers or withdrawing offers that were made earlier should intuitively lead to inefficient negotiation and, consequently, strategies involving such offers or withdrawing offers would not be subgame perfect equilibrium strategies.  This intuition is  confirmed by our technical results.
\end{itemize}

\item[The order of making offers.]  The order in which offers are made by the different players can be essential, especially in case of irrevocable offers. In  such cases we assume that the order in which players can make offers is set by a separate,  exogenous protocol which is an added component of the preplay negotiations game; for instance, it can be strictly alternating or random. Alternatively, the offers may be required to be made simultaneously by all players, as in  \cite{JW05} and  \cite{EP11} but we do not consider that option. 

\item[Rejection of offers.] Once made, offers may, or may not, be officially rejected before the play. A rejection by a player $B$ of an offer made to her by a player $A$ has the same practical effect as a withdrawal of the offer by $A$, but the choice to withdraw or not is now in the hands of $B$. Both options can be reasonable in different scenarios. 

\item[Conditionality of offers.] As discussed earlier, offers may be {\em unconditional}, i.e., not subject to acceptance or rejection by the player to whom the offer is made,  or {\em conditional} upon an expected (suggested or demanded) counter-offer by the player to whom the offer was made. Acceptance of a conditional offer means both acceptance of the offer and making the expected counter-offer. We emphasize that \emph{after acceptance, a conditional offer does not constitute a contract between the players turning the game into a cooperative one, but only a pair of unilateral offers, each binding only its proposer. It therefore transforms the current game into another non-cooperative game.} 
Rejection/withdrawal of a conditional offer means cancellation of both of the unconditional offers of which it consists.  
The option of rejection/withdrawal of conditional offers can be reasonably assumed under some circumstances (e.g. possibility for extended communication and for a low-cost negotiations), but not in others. We will consider both cases separately.
\end{description}

\section{Preplay offers and induced game transformations}
\label{sec:trans} 

In this section we describe the game transformations induced by preplay offers in a general and more technical fashion. 
\vcut{
(Subsection \ref{sec:general}). We discuss some of their basic properties and then formally 
extend the framework of preplay offers with conditional offers and briefly discuss withdrawals of offers \pcut{punishments} and offers contingent on strategy profiles in Subsection \ref{sec:offers}.
}

\subsection{Transformations of normal form games by preplay offers}\label{sec:general}
\begin{figure}[htb]\hspace*{\fill}%
\begin{game}{6}{4}
        & $B_1$ & $\cdots $ & $B_j$    & $\cdots $\\
       $A_1$ & $\cdots$ & $\cdots $ & $ a_{1j}, b_{1j}$    & $\cdots $\\
      $A_2$  & $\cdots$ & $\cdots $ & $a_{2j}, b_{2j}$    & $\cdots $\\
       $\cdots$ & $\cdots$ & $\cdots $ & $\cdots$    & $\cdots $\\
     $ A_i$  & $\cdots$ & $\cdots $ & $a_{ij}, b_{ij}$    & $\cdots $\\
       $\cdots$ & $\cdots$ & $\cdots $ & $\cdots$    & $\cdots $
  
\end{game}\hspace*{\fill}%
\caption{A general 2-player game}
\label{general}
\end{figure}

Here we formally define the notion of transformation induced by a preplay offers. For technical convenience we consider general 2-player game with a payoff matrix given in Figure \ref{general}; the case of N-player games is a straightforward generalization.%, but see Section \ref{sec:ext-N} for more general types of offers in that case. 
 
Suppose player $A$ makes a preplay offer to player $B$ to pay her additional utility\footnote{The reason we allow vacuous offers with $\alpha= 0$ is not only to have an identity transformation at hand, but also because such offers can be used by players as signaling, to enable coordination.
%, as in the transformed BoS game in Example \ref{example:BoS}.
} $\alpha\geq 0$ if $B$ plays $B_{j}$. Recall that we denote such offer by $A \xlongrightarrow{\alpha \slash  B_{j}} B$. It transforms the payoff matrix of the game as indicated in Figure \ref{general-modified}.

\begin{figure}[htb]\hspace*{\fill}%
\begin{game}{5}{4}
        & $B_1$ & $\cdots $ & $B_j$    & $\cdots $\\
       $A_1$ & $\cdots$ & $\cdots $ & $ a_{1j}-\alpha, b_{1j}+\alpha$    & $\cdots $\\
      $A_2$  & $\cdots$ & $\cdots $ & $a_{2j}-\alpha, b_{2j}+\alpha$    & $\cdots $\\
       $\cdots$ & $\cdots$ & $\cdots $ & $\cdots$    & $\cdots $\\
     $ A_i$  & $\cdots$ & $\cdots $ & $a_{ij}-\alpha, b_{ij}+\alpha$    & $\cdots $\\
       $\cdots$ & $\cdots$ & $\cdots $ & $\cdots$    & $\cdots $
  
\end{game}\hspace*{\fill}%
\caption{A general 2-player game with an offer.}
\label{general-modified}
\end{figure}

We will call such transformation of a payoff matrix a {\bf primitive offer-induced transformation}, or a POI-transformation, for short. 

Several preplay offers can be made by each players. Clearly, the transformation of a payoff matrix induced by several preplay offers can be obtained by applying the POI-transformations corresponding to each of the offers consecutively, in any order. We will call such transformations {\bf offer-induced transformations}, or OI-transformations, for short. Thus, every OI-transformation corresponds to a \emph{set} of preplay offers, respectively a set of POI-transformations. 
Note that the set generating a given 
OI-transformation need not be unique, e.g., $A$ can make two independent offers 
$A \xlongrightarrow{\alpha_{1} \slash  B_{j}} B$
and 
$A \xlongrightarrow{\alpha_{2} \slash  B_{j}} B$
equivalent to the single offer 
$A \xlongrightarrow{\alpha_{1}+\alpha_{2} \slash  B_{j}} B$.

The general mathematical theory of OI-transformations is studied in more detail in \cite{VG-GT}. Here we only mention some observations about the game-theoretic effects of OI-transformations, which will be useful later on. 

\begin{enumerate}
\item An OI-transformation does not change the sum of the payoffs of all players in any outcome, only redistributes it. In particular, OI-transformations  preserve the class of zero-sum games. 

\item An OI-transformation induced by a preplay offer by player $A$ does not change the preferences of $A$ regarding her own strategies. In particular, (weak or strict) dominance between strategies of player  $A$ is invariant under OI-transformations induced by preplay offers of $A$, i.e.: a strategy $A_{i}$ dominates (weakly, resp. strongly) a strategy $A_{j}$ before a transformation induced by a preplay offer made by $A$ if and only if $A_{i}$ dominates (weakly, resp. strongly) $A_{j}$ after the transformation.  

\item The players can collude to make \emph{any} designated outcome, with any 
redistribution of its payoffs, a dominant strategy equilibrium, by exchanging sufficiently high offers to make the strategies generating that outcome with 
that redistribution of the payoffs, strictly dominant. 
\end{enumerate}

Thus, preplay offers can transform the game matrix radically.
However, we note that not every matrix transformation that preserves the sums of the payoffs in every outcome can be induced by preplay offers. In particular, this is the case if the transformed matrix differs from the original one in only one payoff. For general necessary and sufficient condition for a normal form game to be obtained from another by preplay offers see \cite{VG-GT}.

\medskip
A central question arising is what should be regarded as a \emph{solution} of a strategic game allowing binding preplay offers. The possible answers to that question crucially depend on the additional assumptions discussed earlier and on the procedure of 'preplay negotiations'; these will be discussed further.

\subsection{Extending preplay offers and OI-transformations}
\label{sec:offers}

\subsubsection{Conditional offers}
Unconditional offers always decrease the proponent's payoff at some outcomes, and hence making an unconditional offer comes with a cost. As in the Prisoners' Dilemma 2 Example \ref{example:PD2}, 
%Battle of the Sexes (Example \ref{example:BoS}), 
this can be a hindrance for making mutually beneficial offers and we will discuss this problem in more details in Section \ref{subsec:NoCond-NoW}. Furthermore, often in real life situations players who make such preplay offers expect some form of reciprocity from their fellow players and make their offers conditional on an expected `return of favour'.

For these reasons, we now extend the preplay offers framework to enable players to {\em suggest} a transformation of the starting game, by making a {\em conditional offer} to an opponent for payment subject to playing a certain strategy, in exchange for a similar `counter-offer' from that opponent. More precisely,  every conditional offer, denoted as $A \xlongrightarrow{\alpha \slash  \sigma_B \ \mid \  \beta \slash \rho_A} B$ is associated with a  {\bf suggested  transformation} of the starting game  $\mathcal{G}$ into a game $\mathcal{G}(X)$ where 
$X = \{A \xlongrightarrow{\alpha \slash  \sigma_B} B, B \xlongrightarrow{\beta \slash \rho_A} A\}$.

Two responses of the recipient of a conditional offer  $A \xlongrightarrow{\alpha \slash  \sigma_B \ \mid \  \beta \slash \rho_A} B$ are possible:  
it can be {\em accepted} or {\em rejected} by the player receiving it. If rejected, the offer is immediately cancelled and does not commit any of the players to any payment, and therefore it does not induce any transformation of the game matrix. If accepted, the {\bf actual transformation} induced by the offer is the suggested transformation defined above. 
Two important observations:
\begin{itemize}
\itemsep = -2pt
\item an unconditional offer has the same effect as an accepted conditional offer with a trivial counter-offer where $\beta=0$. 
\item a conditional offer can be seen as the proposal of two separate unconditional offers {\em that can only be enforced together}.
\end{itemize}

Conditional offers can be made to different players. Multiple conditional offers can be made to the same player, contingent upon same or different strategies of the recipient and the proposer, too. 

\subsubsection{Withdrawals of offers and transformations induced by them}

Withdrawal of an offer, i.e. a 'change of mind' by the player who makes the offer, can be simulated in a sense by matching the amount $\alpha$ offered by $A$ contingent on a given strategy $\sigma_B$ of $B$ by offers from $A$ to $B$ for the same amount $\alpha$, contingent on \emph{every other strategy of $B$}. However, his simulated offer withdrawal is costly for $A$ and, while the preferences on all outcomes remain the same for both players, the game is no longer the same. A {\em proper} withdrawal of $A$'s offer can only be achieved if $A$ extends his offer to cover all possible strategies of $B$ and $B$ offers in return to pay back the amount $\alpha$ to $A$ unconditionally, that is, makes offers of amount $\alpha$ to $A$, contingent on all strategies of $A$.

If a player $A$ withdraws an unconditional offer $A \xlongrightarrow{\alpha \slash  \sigma} B$ made earlier by her, the transformation $\mathcal{G}(X_A)$ of the game $\mathcal{G}$ induced by that offer must be reverted. The {\bf withdrawal} of a transformation $\mathcal{G}(X_A)$ is again a transformation, induced by the (fictitious) negative offer $A \xlongrightarrow{-\alpha \slash  \sigma} B$. 
Likewise, the transformation associated with withdrawal of an earlier made and accepted conditional offer $A \xlongrightarrow{\alpha \slash  \sigma \mid  \beta \slash \rho}  B$  consists of the reversal transformations of the two constituent unconditional offers. 

A withdrawal can thus be seen as a sort of unconditional reversal of payments that have already been enforced in a previous accepted offer. Thereby the possibility of performing a withdrawal strictly depends on the previous history of the negotiation and this feature will be of fundamental importance when treating preplay negotiations as full-fledged extensive games.

%%%%%%%%%%
\pcut{
\subsubsection{Punishments}

Offers, be them conditional or not, can be modeled as future rewards uniformly associated with specific moves. Likewise, one can think of negative offers, or  {\bf punishments} as dual operations associated with opponents' moves that transfer payments in the opposite direction. Formally, punishments induce  transformations of the type 
$A \xlongrightarrow{-\alpha \slash  \sigma_B} B$, where $\alpha \in \mathbb{R}_{+}$.

\begin{figure}[htb]\hspace*{\fill}%
\begin{game}{2}{2}
      & $L$    & $R$\\
$U$   &$a- \alpha,b + \alpha  $   &$c,d$\\
$D$   &$e- \alpha,f+ \alpha $   &$g,h$
\end{game}\hspace*{\fill}
\begin{game}{2}{2}
      & $L$    & $R$\\
$U$   &$a+ \alpha,b- \alpha$   &$c,d $\\
$D$   &$e+ \alpha,f- \alpha$   &$g,h $
\end{game}\hspace*{\fill}%
\caption{On the left: transformation of a game by an unconditional offer for playing $L$; on the right: transformation of the game by an unconditional punishment for the same strategy. Notice the similarity with making and withdrawing an offer.}
\end{figure}

Even though preplay offers are assumed to be non-negative, just like withdrawals,  negative offers/punishments can be \emph{simulated} by non-negative ones in the sense of effecting `equivalent' transformed games, in terms of the players' preferences over the outcomes. Namely, note that an outcome is dominant strategy equilibrium in the game resulting from $A$ punishing player $B$ for playing $s$ if and only if it is dominant strategy equilibrium in the game resulting from $A$ rewarding $B$ for playing any strategy other than $s$ (of the same) amount. So, 
a player $A$ may offer a payment $\alpha>0$ to player $B$ for \emph{every strategy of $B$, except a designated, undesirable for $A$, strategy $\sigma_B$}. The net effect of such offer is that, in the transformed game player $B$ would be \emph{punished} by not receiving the offered amount $\alpha$ if he plays the strategy $\sigma_B$. Again, such simulation is strictly beneficial for player $B$, whereas player $A$ is paying a price for the ability to penalize $B$, so we do not adopt this simulation further.  

Also, note the following:
\begin{itemize}
\itemsep = -2pt
\item as in the case of unconditional offer, a punishment has no effect on the relative dominance relation among the punisher's  strategies. 

\item however, unlike unconditional offers, punishments are always rewarding for the punisher, as each outcome after the punishment makes the punisher at least as better off as the same outcome before the punishment.

\end{itemize}

These observations show that allowing players to punish each other technically amounts to empowering them with the capacity of withdrawing offers that they have never made. What is more, a punishment transforms the game into a one that is more beneficial for the punisher, independently of what will actually be played; think of a player punishing all his opponents for playing all their strategies.  Allowing such form of unrestrained punishments goes against our 
fundamental principle that preplay offers commit only the proposer and not the receiver and. Moreover, allowing punishments can have detrimental effects on understanding the course of the game with preplay offers so we will not consider punishments  further in the paper.  Still, we note that more controlled forms of punishment have been considered in the literature, for instance players sacrificing part of their payoffs to punish their fellow players \cite{fehr-nature}, or threatening them by playing strategies minimizing their payoff, independently of the cost for the player himself \cite{coleman}. We believe that allowing these milder forms of punishment can help understanding several scenarios of preplay negotiations. Yet, we leave their treatment to future investigation.

\subsubsection{Offers contingent on strategy profiles}

More complex offers were considered in \cite{JW05} and \cite{EP11}, contingent not just on the recipient's strategy but on an entire strategy profile (i.e., on an outcome). It is not difficult to show that such an offer, that redistributes the payoffs of only one outcome, cannot be effected by OI-transformations of the type we consider here. This also follows from a more general result in  \cite{VG-GT}.

For conceptual reasons we do not adopt offers contingent on actions of the offerer in our study. Yet, we compare in detail the framework of \cite{JW05} and \cite{EP11} with ours in Section \ref{sec:related}. 

}
%%%%%%%%

\section{Normal form games with preplay negotiations phase} 
\label{sec:normal-form}
\label{sec:preplay}

In  this section we first give some technical preliminaries and then introduce 
normal form games with preplay negotiations phase, first informally and then we formally define preplay negotiations games and discuss some features of them and the concept of efficiency of negotiation strategies.

\vcut{\vnote{To update} 
briefly discuss rationality assumptions and solution concepts for the normal form games. Then we model preplay negotiations as extensive games of perfect 
%or almost perfect (if simultaneous offers are allowed) 
information and define generic notions of rationalilty and solution concepts for these preplay negotiations phase. In particular, we will define solutions of the preplay negotiations phase as outcomes of SPE strategy profiles. Finally, we discuss the combined solution concepts for the entire  games outlined above. 
}

\subsection{Preliminaries: solution concepts and values of normal form games} \label{sec:value}

We will be using $i,j, \ldots$ for variables ranging over players, while $A,B,\ldots$ will denote individual players. 

\subsubsection{Normal form games}
Let $\mathcal{G}=(N, \{\Sigma_i\}_{i\in N}, u)$ be a normal form game (hereafter abbreviated as NFG), where $N = \{1,\ldots,  n \}$ a finite set of players, $\{\Sigma_i\}_{i\in N}$ a family of strategies for each player and $u: N \times \prod_{i \in N} \Sigma_i \to \mathbb{R}$ is a {\bf payoff function} assigning to each player a utility for each strategy profile. The game is played by each player $i$ choosing a strategy from $\Sigma_i$. The resulting strategy profile $\sigma$ is the {\bf outcome} of the play and 
$u_{i}(\sigma) = u(i,\sigma)$ is the associated payoff for $i$. 
An outcome of a play of the game $\mathcal{G}$ is called {\bf maximal} 
 if it is a Pareto optimal outcome with the highest sum of the payoffs of all players.

\subsubsection{Solution concepts and solutions of normal form games}

Let $\bf{G}_N$ be the set of all normal form games for  a set of players $N$. By  {\bf solution concept for  $\bf{G}_N$} we mean  a map $\solc$ that associates with each $\mathcal{G} \in \bf{G}_N$ a non-empty set $\sol{\mathcal{G}}$ of outcomes of $\mathcal{G}$, called the {\bf\solc-solution of the game}. At times we will talk about players' strategies that are consistent with some solution concept. For a player $i$, we denote $\solc_i$ to be the restriction of the mapping $\solc$ to $i$ returning, instead of full outcomes, only strategies of player $i$ consistent with $\solc$ in the sense that $\solc_i({\mathcal{G}})= \{\sigma_i \in \Sigma_i \mid \sigma \in \sol{\mathcal{G}}\}$. Slightly abusing notation we will also consider mappings of the form $\solc_{-i}$ to indicate the  mapping $\sol{\mathcal{G}}$ restricted to player $i$'s opponents. Solution concepts formalize the concepts of rationality of the players in the strategic games. A \solc-solution of a strategic game $\mathcal{G}$ basically tells us what outcomes of the game the players could, or should, select in an actual play of that game, if they adopt the solution concept \solc. 

In this work we do not commit to a specific solution concept for the normal form games but we assume that the one adopted by the players satisfies the necessary condition that \emph{every outcome in any solution prescribed by that solution concept must survive iterated elimination of strictly dominated strategies}. We will call such solution concepts {\bf acceptable}. This condition  reflects the assumption that players would never play strategies that are dominated, and that this exclusion is a common knowledge amongst them and can be used in their strategic reasoning. Thus, the weakest acceptable solution concept is the one that returns all outcomes surviving iterated elimination of strictly dominated strategies.

Games for which the solution concept \solc returns a single outcome will be called {\bf\solc-solved}. For instance, every game with a strongly dominating strategy profile is 
\solc-solved for any acceptable solution concept  \solc. 
Games for which  \solc  returns only maximal outcomes will be called {\bf optimally \solc-solvable}. If for every player all these maximal outcomes provide the same payoffs, we call the game   {\bf perfectly  \solc-solvable}. Games that are \solc-solved and perfectly \solc-solvable (i.e., \solc returns one maximal outcome) will be called {\bf\solc-perfectly solved}. 

The ultimate objective of a preplay negotiation is to transform the starting NFG into a perfectly \solc-solvable one. Ideally, it should be a \solc-perfectly solved one, but this is not always possible: cf. any symmetric Coordination game. 
%like the 
%transformed 
%Battle of the Sexes 
%game on Figure \ref{BoS-second}.

\subsubsection{Players' expected values of a game}  
It is necessary for the preplay negotiation phase that will be introduced later for each player to have an {\bf expected value} of any NFG
that can be played. Naturally, that expected value would depend not only on the game but also on the adopted solution concept 
and on the player's level of risk tolerance.  A risk-averse player would assign as expected value the minimum of his payoffs over all outcomes in the respective solution, while a risk-neutral player could take the probabilistic expected value of these payoffs, etc. Note that the expected value of any \solc-solved game for any player $i$ naturally should equal the payoff for $i$ from the only outcome in the solution. 

For sake of definiteness, unless otherwise specified further, we adopt here the conservative, risk-averse  approach  and will define for every acceptable solution concept \solc, game  $\mathcal{G}$ and a player $i$, the expected value of $\mathcal{G}$ for $i$ relative to the solution concept \solc to be: 
\[ \val^{\solc}_{i}(\mathcal{G}) = \max_{\sigma_i \in \solc_i({\mathcal{G}})} \min_{\sigma_{-i} \in \solc_{-i}({\mathcal{G}})}  u_{i}(\sigma) 
\]

\subsection{Normal form games with preplay negotiations phase informally} 
%Preplay negotiation games

Our setting for normal form games with preplay offers begins with a given `starting' normal form game $\mathcal{G}$ and  consists of two phases:

\begin{itemize}
\itemsep = -2pt
\item A {\em preplay negotiation phase}, where players negotiate on how to transform the  game $\mathcal{G}$ by making unconditional offers, accepting or rejecting conditional offers they receive, and possibly withdrawing old ones. This phase constitutes an extensive form game, which we call a \emph{preplay negotiation game} (PNG). 

\item An {\em actual play}  phase where, after having agreed on some OI-transformation $X$ in the previous phase, the players play the resulting game $\mathcal{G}(X)$. 
\end{itemize}

%\vnote{Move this further?} 
Players engage in pre-play negotiations with the purpose of reaching a best {for them} possible agreement based on OI transformation of the original game $\mathcal{G}$. Major questions that we set out to study are: 

\begin{itemize}
\itemsep = -2pt
\item \textit{What constitutes an optimal/rational/efficient negotiation strategy and what are the expected outcome(s) when players follow such strategies? }
\item  \textit{In particular, when can players agree upon Pareto optimal outcomes in their preplay negotiations if playing rationally?}
\item  \textit{What can, or should, players agree upon in the preplay negotiations phase when the original game has several Pareto optimal outcomes?}
\end{itemize}

%\subsection{Preplay negotiation games} 
Further we introduce, first informally and then fully formally, the setup of PNGs as extensive-form bargaining games, including the concepts of moves and histories, the order of moves, the possibility of players come to a disagreement, and finally a notion of solution for these games.

\subsection{Moves, histories and preplay negotiations games}\label{sec:moves}

Depending on some of the optional assumptions, the players can have several possible moves in the preplay negotiations phase. Let us consider the most general case, where both conditional offers and withdrawals of offers are allowed. Then the moves available to the player whose turn is to play depend on whether or not he has received any conditional offers since his previous move. If so, we say that the player has {\bf pending conditional offers}. The possible moves of the player in turn are as follows. 

\begin{enumerate}
\item If the player has no pending conditional offers, he can: 
\begin{enumerate}
\item \emph{Make an offer} (conditional or not). 
\item \emph{Pass}.
\item (Optional)
\emph{Withdraw an offer} he has made at a previous move. 
\item (Optional) \emph{Opt out} (see Section \ref{subsec:disagree}).
\end{enumerate}
\item If the player has pending conditional offers, for each of them he can: 
\begin{enumerate}
\item \emph{Accept the pending offer} by making the requested counter-offer to the player who has made the conditional offer, and then make an offer of his/her own or pass or opt out (when available).
\item \emph{Reject the pending offer}, and then make an offer of his/her own or pass or opt out (when available).
\end{enumerate}
\end{enumerate}

If \emph{all players have passed} at their last move, or \emph{any player has opted out}, the preplay negotiations game is over. 

\vcut{
Note that while conditional offers always require a response (acceptance or rejection) by the player receiving them, this is not the case for unconditional offers. The effect of the latter ones is to immediately update the normal form game as specified by the offer.
}

We say that an offer of the game is {\bf passing} if its acceptance by the opponents is followed by a pass of the proponent. In other words, the one making the offer would be happy to end the game with the suggested transformation. Likewise, an acceptance is passing if, once declared, it is followed by a pass move of the same player. In other words, with a passing acceptance a player declares agreement to terminate the game with the proposed transformation. 
When opting out is not allowed, 
passing moves (i.e. offers or acceptances that are passing), are the only way for players to terminate the game in agreement and the only way to effectively deviate from undesired outcomes.

We now define the notion of a {\bf history} in the preplay negotiations phase as a finite or infinite sequence of admissible moves by the players who take their turns according to an externally set protocol (see further). 
Every finite history in such a game is associated with the {\bf current NFG}: the result of the OI-transformation of the starting game by all offers that are  so far made, accepted (if conditional) and currently not withdrawn. 
The current NFG of the empty history is the input NFG of the preplay negotiations game.

 A {\bf play} of a preplay negotiations game is any finite history at the end of which the preplay negotiations game is over, or any infinite history. 

In order to eventually define realistic solution concepts for preplay negotiations games we need to endow every history in such games with value for every player. Intuitively, {\bf the value of a history} is the value for the player of the current NFG associated with that history in the case of non-valuable time, and the same value accordingly discounted in the case of valuable time. 

Now, a {\bf preplay negotiation game (PNG)} can be defined generically as a turn-based, possibly infinite, extensive form game that starts with an input  NFG 
%normal form game 
$\mathcal{G}$ and either ends with a transformed game 
$\mathcal{G}'$ or goes on forever, which we discuss further. The 
{\bf outcome of a play of the PNG} is the resulting  transformed game 
$\mathcal{G}'$ in the former case and 'Disagreement' (briefly $D$) in the latter case. 

%The notion of PNG and its outcomes can be defined formally, e.g., like a bargaining game as in Osborne and Rubinstein's bargaining model \cite{rub82,OR90,OR}. This is done in the technical paper \cite{Preplay-tech1}. 

\subsection{Preplay negotiation games formally} 

%For technical simplicity, we will only define formally two-player PNG.  

Here we provide a formal definition for the general N-player case of 
preplay negotiation games.
%\vnote{Revised. Check!}

%%%%%%%%%%%%%%%%%%%%%%%%%%%
\begin{definition}[Preplay negotiation game] 
\label{def:PNG}\label{def:constraints}
%%%%%%%%%%%%%%%%%%%%%%%%%%%

A {\em preplay negotiation game}  is a tuple 
$\mathcal{E} = (\N, \mathcal{G}, \solc, 
%\{A,B\}, 
\act, \hist, \turn,\{\Sigma_i\}_{i\in\N}, \bfg, \out, \bfu)$, where:

%\itemize
\begin{itemize}
\itemsep = -2pt
\item N is the set of players.

\item $\mathcal{G}$ is the starting normal form game. 

\item $\solc$ is an acceptable solution concept for normal form games. 

\item  $\act$ is a set of {\bf actions}, or {\bf moves} of types as discussed earlier. 
\item $\hist$ is a non-empty set of finite or infinite sequences of actions, called {\bf histories}, that  includes the empty sequence $\epsilon$ and is \emph{prefix-closed}, meaning that every prefix of a history in $\hist$ belongs to $\hist$, and \emph{limit-closed}, meaning that the infinite union of a chain by extension of finite histories  in $\hist$ belongs to $\hist$, too.  

A history $h \in \hist$ is {\b fterminal in $\hist$} if it is infinite or there is no history in $\hist$ extending it. 
The set of terminal histories in $\hist$ is denoted by $\trmhist$ and  the set of finite histories in $\hist$ by $\fhist$. 

For $h,h' \in \fhist$ and $o \in \act$ we denote by $h;o$ the extension of $h$ with the action $o$ and by $h;h'$ the concatenation of $h$ with $h'$.  XX

\item $\turn: \hist \setminus \trmhist \to N$ is the {\bf turn function}, assigning the players who are to move at non-terminal histories. We denote $\hist _{i}:= \turn^{-1}(i)$ for each $i\in N$ the set of histories where it is $i$'s turn to play. 

Here we assume that the turned function is exogenously defined, e.g. in some fixed cyclic order or depending on the last move made.  

\item 
$\Sigma_i$, for each $i\in N$, is a non-empty set of  {\bf strategies}  
$\sigma_i : \hist _{i} \to \act$ that assigns an action for $i$ to any non-terminal history in $\hist _{i}$. 

%\vnote{Why do we want the strategies to be part of the definition of a PNG?}
%\pnote{What is a game if not the strategies players can use?}

\item $\bfg: \hist \to {\bf G_N}$ is a function associating to each finite history the {\bf currently accepted NFG}, defined  below. 
%\vnote{I think we do not have to define a ``current NFG'', too. Do you agree?}
%\pnote{I think I agree :)}
 
\item $\out: \prod_{i \in N} \Sigma_i \to \trmhist$ is an
 {\bf outcome play function},  assigning to each strategy profile $\sigma$ the terminal history $\out({\sigma})$ generated by $\sigma$.  
 
 Respectively, the  {\bf outcome NFG} of $\sigma$ is 
 $\bfg(\out({\sigma}))$. 
 
\item 
$\bfu: \N \to (\trmhist \to \mathbb{R})$ is the {\bf utility function of the PNG}, associating to each player the payoff function $\bfu_i$ such that $\bfu_i(z)={\bf v^{\solc}}_i(\bfg(z))$ for every  finite $z\in \trmhist$. 
Further, for $z,z^{\prime} \in \trmhist$, with $z$ finite and $z^{\prime}$ infinite, we require that  $\bfu_i(z) \geq \bfu_i(z^{\prime})$ for all players $i$, and $\bfu_j(z) > \bfu_j(z^{\prime})$ for some $j$, i.e., no disagreement is better for all players than any agreement.
\end{itemize}

Now we define the function $\bfg$. Its intended meaning is that $\bfg(h)$ would be the outcome of the PNG if the game ended at $h$. Its precise definition depends on the repertoire of moves that are allowed in the PNG, as follows: 

%We require that $\bfg(\epsilon)=\mathcal{G}$. 

\begin{itemize}
\itemsep = -2pt
\item $\bfg(\epsilon)$ is the starting normal form game $\mathcal{G}$. 
\item If $h =  h';o$, where the last move $o$ is an unconditional offer, then $\bfg(h) = \bfg(h')(o)$, i.e., the transformation of $\bfg(h')$ by the offer $o$. 

\item If $h =  h';a$, where the last move $a$ is an acceptance of a conditional offer $o$, then $\bfg(h) = \bfg(h')(o)$. 
%i.e., the transformation of $\bfg(h')$ by the offer $o$. 

\item If $h =  h';o;h'';w$, where $w$ is a withdrawal of the offer $o$ then $\bfg(h)=\bfg(h';h'')$, i.e. the transformation by the withdrawn offer is reverted. 

\item In all other cases of actions $a$, $\bfg(h;a) = \bfg(h)$. 
\end{itemize}

\end{definition}

\textbf{Solution of PNG.}  
By solution of a PNG we mean \emph{the set of all transformed normal form games $\bfg(h)$ for all outcomes $h$ of plays effected by subgame perfect equilibrium (SPE) strategy profiles in the PNG}.

\vcut{
%\subsection{The order of making moves}\label{sec:order}
\paragraph{On the order of making moves} 
Depending on some of the optional assumptions, the players can have several possible moves in the preplay negotiations game, which they can make simultaneously, in several rounds, or by taking turns according to some externally set protocol or by a randomized procedure. We will focus on turn-based negotiations games but 
We will not discuss here how the order of making moves is determined. We only state that when time is not valuable the order in which the players take turns to make their offers is irrelevant for the eventual solution, \vnote{I disagree here. See examples in the case of unconditional offers.} and multi-round negotiations with simultaneous offers are reducible to multi-round turn-based negotiations with any order of making moves. \vnote{ I do not see this, either. To discuss.} 
}

\subsection{Disagreements}
 \label{subsec:disagree}
 
% \vnote{We can shorten this subsection, if necessary}
Clearly, players would only be interested in making preplay offers inducing payoffs that are ``optimal'' for them. Therefore, rational players are expected to ``negotiate'' in the preplay phase the play of Pareto optimal outcomes. In particular, if the game has a unique strictly Pareto dominant outcome then the players can negotiate a  transformation of the game to make it the (unique) dominant strategy equilibrium. Yet, players that are getting lesser shares of the total payoff may still want to negotiate a redistribution, so even in this case the outcome of the preplay negotiations is not a priori obvious. In particular, there is no guarantee that the PNG will ever terminate, i.e. that its solution is non-empty.  

The PNG may terminate if all players pass at some stage,  in which case we say that the players have reached agreement, or may go on forever, in which case the players have failed to reach agreement; we call such situation a \emph{(passive) disagreement} and we denote any such infinite history with $D$. We will not discuss disagreements and their consequences here, but will make the explicit assumption that \emph{any agreement is better for every player than disagreement} in terms of the payoffs, by assigning payoffs of $-\infty$
in the entire game for each player if the PNG evolves as a disagreement.
However, we also outline a more flexible and possibly more realistic alternative, whereby players can explicitly express tentative agreements with the status quo before every move they make, essentially by saying ``\textit{So far so good, but let me try to improve the game further by offering \ldots}'', or express disagreements, by  
essentially saying ``\textit{No, I am not happy with the way the negotiations have 
developed since the last time I agreed, so I'd like to improve the game 
by offering instead \ldots}''. This type of negotiations involves, besides the other moves listed above, also formal statements of acceptance or non-acceptance of the current NFG, where the input NFG is automatically accepted by all players and at every stage of the negotiations, the  current NFG is the one on which they are currently negotiating by making offers, whereas 
the currently accepted NFG is the last current one for which all players have explicitly stated acceptance. Then if at any stage of the PNG any player is currently unhappy and realizes that he cannot improve further because of the other players not willing to accept his best conditional offers, then he can terminate the negotiations by explicitly {\bf opting out}, 
which would leave as an outcome game the currently accepted NFG.

%\subsection{Preplay negotiations games and their solutions}\label{subsec:PNG}

\subsection{Preplay negotiations games and assumptions on players' rationality} 
%feasible moves and efficient negotiation strategies,}\label{sec:feasible}

%\vnote{I think this subsection needs a deep revision. All these concepts should be re-considered in the context of the extensive form PNG. To discuss.} 

%\vnote{Revised. Check!}
%\pnote{Revised. Check!} 

In order to understand how solutions of preplay negotiation games  look we need to understand the 
%subgame perfect 
equilibria of PNGs. This seems a very complex problem and its analysis crucially depends on the specific optional assumptions that we make regarding the types of allowed moves, value of time, and most importantly -- the players' common rationality assumptions in the PNG. 

While our analysis of games with conditional offers will be based on a standard SPE analysis --- which allows for a direct connection with bargaining games --- for the case of unconditional offers we discuss and adopt what we call an
\emph{immediate rationality assumption}, i.e., we study players that calculate optimal offers on a given NFG without considering the possible extensions of the preplay negotiation game after the immediate expected response to their move. The difference is that in the general case a player may afford making a `sub-optimal' move in the play of a PNG, transforming the currently accepted game into one with a lesser value for that player, with the expectation, justified by a long-term rationality assumption, that the opponent will not opt out but will continue the negotiation for the sake of reaching a mutually better outcome, reasoning likewise.  
On the other hand, in the restricted case of \textit{Immediate Rationality Assumption}, hereafter abbreviated as IRA,
%\vnote{or, how about ``short-sighted rationality assumption''?}
 an optimal strategy of a player would only prescribe moves that would guarantee that the resulting transformed NFG has a no lesser value for the player making that move than the currently accepted NFG. To put it simple, IRA implies that players are short-sighted and prescribes to them to play optimal, but `locally safe'  strategies in the PNG. This assumption is often justified, e.g., when players have no a priori knowledge about each rationality and patience and also makes the analysis somewhat easier, but by no means trivial, as we will see further. 
As the analysis in the case of many-player PNGs is still very complicated and cannot be presented in a single paper, hereafter we restrict attention to the 2-player case. In order to carry our such analysis and to make statements about existence of `good' solutions assuming IRA, we first need to discuss the notions of `feasibility of moves' and `efficiency of negotiation strategies'. In this context we will use the term  ``efficient''  not in its standard game-theoretic sense, i.e., by applying it to outcomes, but to the way outcomes are reached.

\subsubsection{Feasible offers and moves} 

In principle, players can make offers that would induce transformations decreasing their expected value of the game. Generally, such offers would not be rational to make under the IRA, but they may be still be admissible in some circumstances, e.g., when conditional offers are not allowed but withdrawals of offers are. We say that a player's offer is {\bf weakly feasible} if it does not decrease that player's expected value of the game in the game transformed by that offer; the offer is {\bf feasible} if it strictly increases that expected value. 
%\vnote{I realize that we are confusing the value of the player for the NFG and the value in a PNG. To discuss.}
This is a generic notion of feasibility of offers, which needs to be extended further to the notion of feasible moves in the PNG. The latter is specific to some of the optional additional assumptions which we will discuss in more detail for the 2-player case in the next section. 
We argue that, assuming IRA, in order for a player's strategy in the preplay negotiation phase to be a part of a rational solution, 
%i.e., subgame perfect equilibrium, 
it  must only involve weakly feasible moves. 
%
%\pnote{Where do we show this last fact?}

\subsubsection{Minimal offers}

With their preplay offers players want to create incentives for the other players to play desired strategies. So, feasibility is a necessary condition for 
an offer to be made in an actually played PGM, but it is not sufficient for it to be a part of a subgame perfect equilibrium strategy. Clearly, an optimal offer from a player to another would be a \emph{minimal feasible one} providing a sufficient incentive for the recipient of the offer to play the desired transformation, but not more than that. The question of what is a minimal offer that achieves such objective crucially depends on the adopted solution concept and, in particular, on the rationality assumptions and reasoning skills of the recipient. For instance, if the players know the solution of the starting normal form game $\mathcal{G}$,  
induced by the adopted solution concept, then they also know which outcomes can be selected among the ones surviving the iterated elimination process. Thereafter, if a player $A$ wants to induce with a preplay offer another player $B$ to play a given strategy $\sigma_B$ then, for any acceptable solution concept, it would suffice for $A$ to make any sufficiently large offer that would turn $\sigma$ into a strictly dominant strategy for $B$. But, such offer may be prohibitively costly or, depending on the solution concept and the rationality assumptions for $B$, unnecessarily generous. For instance, when a player $B$ receives an offer 
$A \xlongrightarrow{\delta \slash  \sigma_B}B$, he should naturally expect that $A$ considers playing $A$'s best response to  $\sigma_B$, so $B$ can anticipate the outcome of the transformed game, and if $B$ considers that outcome better than his current expected value, that should suffice for $A$'s offer to work.

A technical detail: 
it is often the case that no minimal offer exists that guarantees to achieve the objective, e.g., to turn the desired strategy  into a \emph{strictly}  dominant one. For instance, if it suffices for $A$ to pay to $B$ any amount that is \emph{greater than $d$} for that purpose, then any offer of $d + \epsilon$, for $\epsilon > 0$, should do. Clearly, however, there is a practical minimum beyond which a player in question would not bother optimizing any further, so we will often refer to offers of payments $d^{+}$ meaning $d + \epsilon$ for `sufficiently small $\epsilon > 0$'  without specifying the value of $\epsilon$, but still allowing its further reduction, as long as it remains strictly positive.

\subsubsection{Efficient negotiation strategies} 

\begin{definition}[Efficient negotiation strategies]
A strategy in the PNG 
%preplay negotiation game 
is an {\em efficient negotiation strategy} if it only involves making (minimal) feasible offers, it passes once they are accepted, and  -- in the case when conditional offers are allowed -- 
at no point prescribes withdrawal of earlier made offers.  It is {\em strongly efficient} if the vector of payoffs of the outcome it attains is a redistribution of the vector of payoffs of a maximal outcome.
\end{definition}

A number of important relevant questions arise: 
\begin{itemize}
\itemsep = -2pt
\item Is it the case that every subgame perfect equilibrium (SPE) strategy of a PNG is an efficient negotiation strategy and vice versa?
\item If not, can the inefficient ones be replaced by efficient ones generating the same, or at least as good solution? 
\vcut{
\item Is it the case that, when generalized offers, consisting of sets of unconditional offers, are allowed there are subgame perfect equilibria where each player need not make more than one offer (combining all intended subsequent offers by that player)?
Does this depend on the number of players? 
}
\item Under what conditions can a given (maximal) Pareto optimal outcome in the starting NFG become the unique outcome of the final NFG? 
\end{itemize}

To answer these questions we need an analysis of the solutions of the PNG game.  Further we provide such partial analysis for the case of two players. %assuming IRA.  

\section{Two-player preplay negotiation games with unconditional offers}
\label{sec:2-player-uncond}

\label{subsec:NoCond}

%\vnote{Some conceptual revision is needed here, too.}

We begin with the case of more restricted preplay negotiations, where conditional offers are not possible, or not allowed. As we will see further, the strategic reasoning in such preplay negotiations games is rather different from the case with conditional offers, because any player who makes an unconditional offer puts himself in a disadvantaged position by offering unilaterally a  payment to the other player and thus transforming the payoff matrix to the other player's advantage. Therefore, generally, players are more interested in receiving, rather than in making, unconditional offers and this affects essentially the preplay negotiations phase.  

According to the Immediate Rationality Assumption IRA, here we focus on the \emph{locally rational} behavior of players exchanging unconditional offers, by first determining the \emph{best} (for the offerer) rational unconditional offer that a player can make on a given 2-player NFG. Then we illustrate with some examples possible evolutions and outcomes of the preplay negotiation phase consisting of exchanging such best offers and draw some conclusions. In other words, here we analyze and illustrate the rationality of {\em moves}, rather than full-blown strategies, suggesting that every {good} notion of IRA-compliant equilibrium used to analyze PNGs without conditional offers should take this rationality into account. 
%while leaving so far open the question of how the subgame perfect equilibria in such PNGs are composed. 
We leave untreated for now the question of how the value of time affects the outcomes of the preplay negotiations games in this case, by tacitly assuming that time is not valuable.

\subsection{The effect of allowing withdrawals of unconditional offers}
\label{subsec:NoCond-W}

We first argue that when withdrawals of unconditional offers are allowed, conditional offers can be simulated, too, even though at the cost of some time delay. Indeed, if player $A$ wants to make a conditional offer $A \xlongrightarrow{\alpha \slash  B_{j} \ \mid \ \beta \slash  A_{i}} B$ she can make the unconditional offer $A \xlongrightarrow{\alpha \slash  B_{j}} B$ expecting the matching (or better) unconditional offer $B \xlongrightarrow{\beta \slash  A_{i}} A$ from $B$. How can the receiver $B$ guess the expected matching offer, if side communication is not possible or not allowed? Note that the offer $A \xlongrightarrow{\alpha \slash  B_{j}} B$ has 2 effects: it changes the payoff table in a way beneficial for $B$ and indicates that player $A$ wants player $B$ to play $B_{j}$. Therefore, $B$ can naturally expect that (disregarding for a moment all other offers) $A$ intends to play her best response to $B_{j}$.  However, an offer from $B$ to $A$ may change $A$'s best response to $B_{j}$ in a way, that would make it more beneficial for $A$, and at least as beneficial for $B$, if $A$ plays another strategy, say $A_{i}$. By inspecting the possibilities $B$ can identify his options for matching offers that would make $A$'s unconditional offer worth her while. If $B$ has more than one such options, he can guess and try. 
If the expected matching offer is not received in the next round of the preplay negotiations, $A$ can subsequently withdraw her offer, thus indicating that her expectations were not met, but later can make it again, possibly repeating this `ritual' until $B$ eventually realizes what is expected from him and offers it (or until $A$ gives up expecting). 
Thus, the case of unconditional offers with withdrawals is essentially reducible to the case where conditional offers are allowed, treated further. We only note here that when time is valuable the simulation suggested above may be costly and leading to side effects. % and leave the details for a further work. 

\subsection{Preplay negotiations with unconditional offers and no withdrawals}
\label{subsec:NoCond-NoW}

The case when no withdrawals of offers are allowed is essentially different. As we will see further, in this case the players can be genuinely disadvantaged by making the first offer, and this can be crucial for the outcome of the negotiations. We can distinguish 3 types of unconditional offers: 

\begin{enumerate}
\item {\bf vacuous offers}, of the kind $A \xlongrightarrow{0 \slash \sigma} B$ for payment of 0. These can be used instead of  passing, but also, more importantly, as a kind of \emph{signaling}, i.e., indication that $A$ expects $B$ to play $\sigma$, for breaking the symmetry in case of symmetric games with several equivalent optimal equilibria. 
%(as in the BoS game in Example  \ref{example:BoS}). 
\item {\bf $\epsilon$-offers}, of the kind $A \xlongrightarrow{\epsilon \slash \sigma} B$ for a small enough $\epsilon > 0$. These can be used similarly, for breaking the symmetry, when $B$ has more than one best for him moves which,  however, yield different payoffs for $A$. Using such a move, $A$ can make any of these strictly preferable for $B$ and, thus, can turn a weak equilibrium into a strict one, with minimal cost. 
\item {\bf effective offers}, of the kind $A \xlongrightarrow{d \slash \sigma} B$ for a (large enough) $d>0$. These are the standard offers used to change the recipient's preferences and influence his choice of strategy in the 2nd phase. 
\end{enumerate}

It is easy to see that in ideal two-players negotiations none of them needs to make two consecutive offers, between which the opponent has passed or made a vacuous offer. Indeed, no player would be better off by making offers  in the same game contingent on two or more different strategies of the opponent; in fact, such multiple offers send to the opponent confusing signals. Furthermore, two or more offers by the same player that are contingent on the same strategy of the opponent can be combined into one.  
So, leaving aside the question of who starts the preplay negotiations game, in the case where only unconditional and irrevocable offers are allowed, the PNG  
consists of a sequence of alternating offers made in turn by the two players until both of them pass. 
Thus, in order to capture the notion of efficient negotiations in this case, we need to analyze the question of \emph{what are the best unconditional and irrevocable offers that a player can make on a given NFG}?  

\subsection{Computing the best unconditional offers of a player} 

What is an IRA-based rational player's reasoning when considering making an unconditional and irrevocable offer to another player in a given NFG $\mathcal{G}$? Suppose, player $A$ considers making such an offer to player $B$. Then, for each strategy $B_{j}$  of  $B$, player $A$ considers making an offer contingent on $B$ playing  $B_{j}$. To make sure that $B$ will play $B_{j}$ in the resulting game, it suffices to make the latter a strictly dominant strategy for $B$. The necessary payment for that, however, can be prohibitively high for $A$ because after that payment $A$'s best response to $B_{j}$ may yield a worse payoff than the current (e.g., maxmin) expected value  for $A$ of the original game. So, a more subtle reasoning is needed, presented by the following procedure. 

\begin{enumerate}
\item For each strategy $B_{j}$  of  $B$, player $A$ looks at her best response to  $B_{j}$. Suppose for now that it is unique, say  $A_{i_{j}}$. Then, this is what   $B$ would expect $A$ to play if $B$ knows that $A$ expects $B$ to play $B_{j}$. In this case, $A$  computes the minimal payment needed to make $B_{j}$ not necessarily a strictly dominant strategy, but a \emph{best response to $A_{i_{j}}$}, i.e., the minimal payment that would make the strategy profile  
$\sigma_{i_{j},j} = (A_{i_{j}},B_{j})$ a Nash equilibrium. That payment is 
\[\delta^{A}_{i_{j},j} = \max_{k} (u_{B}(A_{i_{j}},B_{k}) - u_{B}(\sigma_{i_{j},j})).\]
If it is positive, or is 0 but reached not only for $k=j$ but also for other values of $k$, then, in order to break $B$'s indifference and make $\sigma_{i_{j},j}$ a \emph{strict} Nash equilibrium, $A$ has to add to $\delta^{A}_{i_{j},j}$ a small enough $\epsilon > 0$, thus eventually producing  the minimal necessary payment $\delta^{A}_{j}$.

\item If $A$'s best response to $B_{j}$ is not unique, then $A$ should compute the minimal payment $\delta^{A}_{j}$ needed to make $B_{j}$ the best response of $B$ to \emph{each}  of $A$'s best responses to $B_{j}$. Clearly, that should be the maximum of all $\delta^{A}_{i_{j},j}$ computed above, possibly plus  a small enough $\epsilon > 0$. 

\item Once $\delta^{A}_{j}$ is computed, $A$ computes her expected payoff in the transformed game $\widehat{\mathcal{G}}_{B_{j}}$ after an offer $A \xlongrightarrow{\delta^{A}_{j} \slash B_{j}} B$, which is: 
\[v^{A}(\widehat{\mathcal{G}}_{B_{j}})= u_{A}(\sigma_{i_{j},j}) - \delta^{A}_{j}. \]

\item Finally, $A$ maximizes over $j$: 
\[v^{A}(\widehat{\mathcal{G}}) = \max_{j} v^{A}(\widehat{\mathcal{G}}_{B_{j}}).\]

If the maximum is achieved for more than one $j$, then $A$ can choose any of them, or --better -- the one yielding the least payoff for $B$, thus stimulating $B$ to make her a further offer. 

If this maximum is 0 and reached for only one value of $j$, then there is no need for $A$ to make any offer, because in this case there is a unique Nash equilibrium in the game and $A$ cannot make any offer that would improve on her payoff yielded by that Nash equilibrium. If the maximum is 0, but reached for more than one values of $j$, then $A$ must still make a vacuous offer $A \xlongrightarrow{0 \slash B_{j}} B$ in order to indicate to $B$ for which Nash equilibrium she will play. 
\end{enumerate}

The reasoning for $B$ is symmetric, eventually producing the value $v^{B}(\widehat{\mathcal{G}})$.

The definition of $v^{A}(\widehat{\mathcal{G}})$ implies the following:

\begin{proposition}
Given the NFG $\mathcal{G}$, the value 
$v^{A}(\widehat{\mathcal{G}})$ is the best payoff that player $A$ can guarantee as a result of the players playing any Nash equilibrium induced by an unconditional offer from $A$ to $B$ in the transformed game.
\end{proposition}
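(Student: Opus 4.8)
The plan is to establish the two inequalities that together pin down $v^{A}(\widehat{\mathcal{G}})$ as the claimed optimum. On the one hand I will exhibit a concrete unconditional offer with which $A$ secures the payoff $v^{A}(\widehat{\mathcal{G}})$ at the Nash equilibrium it induces (the lower bound); on the other hand I will show that every unconditional offer from $A$ to $B$ yields $A$ at most $v^{A}(\widehat{\mathcal{G}})$ at the equilibrium it induces (the upper bound). Both halves rest on the same structural fact, already recorded as the second observation on OI-transformations: an offer by $A$ contingent on a strategy of $B$ shifts $A$'s payoffs uniformly along that column, hence leaves $A$'s preferences over her own strategies, and in particular her set of best responses to that column, unchanged.

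For the lower bound, pick an index $j^*$ realizing the outer maximum in the definition of $v^{A}(\widehat{\mathcal{G}})$ and let $A$ make the offer $A \xlongrightarrow{\delta^{A}_{j^*} \slash B_{j^*}} B$. By the way $\delta^{A}_{j^*}$ was computed (Steps 1--2), in the transformed game $\widehat{\mathcal{G}}_{B_{j^*}}$ the strategy $B_{j^*}$ is a \emph{strict} best response of $B$ to each of $A$'s best responses to $B_{j^*}$, while by the invariance just mentioned $A$'s set of best responses to $B_{j^*}$, together with the common payoff $\bar u_{A}(B_{j^*}):=\max_{i}u_{A}(A_{i},B_{j^*})$ they yield her in column $j^*$, is the same as in $\mathcal{G}$. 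Hence every profile $(A_{i},B_{j^*})$ with $A_{i}$ a best response to $B_{j^*}$ is a Nash equilibrium of $\widehat{\mathcal{G}}_{B_{j^*}}$, strict on $B$'s side, so $B$ is bound to play $B_{j^*}$ and $A$, replying optimally, obtains exactly $\bar u_{A}(B_{j^*})-\delta^{A}_{j^*}=v^{A}(\widehat{\mathcal{G}}_{B_{j^*}})=v^{A}(\widehat{\mathcal{G}})$.

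For the upper bound, note that any unconditional offer from $A$ to $B$ has the form $A \xlongrightarrow{\alpha \slash B_{j}} B$ for some column $j$ and $\alpha\ge 0$ (offers contingent on a strategy of $A$ are not offers from $A$ to $B$). Such an offer raises $B$'s payoffs and lowers $A$'s payoffs uniformly by $\alpha$ throughout column $j$, so by invariance it keeps $A$'s best responses to $B_{j}$ unchanged, each now yielding her $\bar u_{A}(B_{j})-\alpha$ there; consequently any Nash equilibrium of the transformed game lying in column $j$ is of the form $(A_{i},B_{j})$ with $A_{i}$ a best response to $B_{j}$ and $\alpha\ge\delta^{A}_{i,j}$. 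For $A$ to be able to \emph{count on} such an equilibrium she needs $B_{j}$ to be a best response of $B$ to \emph{every} best response $A_{i}$ of $A$ to $B_{j}$: $A$ is indifferent among those best responses and cannot credibly commit to the one most favourable to $B$, so only then is $B$ bound to play $B_{j}$, which forces $\alpha\ge\delta^{A}_{j}$ (up to the cosmetic $\epsilon$ that converts a weak equilibrium into a strict one). Therefore $A$'s payoff at the induced equilibrium is $\bar u_{A}(B_{j})-\alpha\le\bar u_{A}(B_{j})-\delta^{A}_{j}=v^{A}(\widehat{\mathcal{G}}_{B_{j}})\le v^{A}(\widehat{\mathcal{G}})$, and combining the two bounds proves the statement.

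The delicate point, and the one the whole argument hinges on, is the treatment of \emph{non-unique} best responses of $A$: one has to justify why $A$ must pay the full $\delta^{A}_{j}=\max_{i}\delta^{A}_{i,j}$ rather than merely enough to turn a single profile $(A_{i},B_{j})$ into an equilibrium. This is exactly where the phrase ``can guarantee'' in the statement does its work --- the justification is the indifference/non-commitment observation above --- and it mirrors Step 2 of the computation of $v^{A}(\widehat{\mathcal{G}})$. A secondary, purely cosmetic complication is that, because of the $\epsilon$-perturbations, the optimal offer is of the form $d^{+}$ and in degenerate cases the optimum is a supremum rather than an attained maximum; this is the convention already adopted in the paper and does not affect the statement.
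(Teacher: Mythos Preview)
Your argument is correct. The paper itself does not supply a proof of this proposition at all: it simply introduces the proposition with the sentence ``The definition of $v^{A}(\widehat{\mathcal{G}})$ implies the following'' and leaves the claim as an immediate consequence of the preceding construction (Steps~1--4). Your two-inequality treatment---exhibiting the offer $A \xlongrightarrow{\delta^{A}_{j^*} \slash B_{j^*}} B$ for the lower bound and bounding an arbitrary offer $A \xlongrightarrow{\alpha \slash B_{j}} B$ via the invariance of $A$'s best responses for the upper bound---is exactly what one would write to make that ``by definition'' rigorous, and your explicit discussion of why the full $\delta^{A}_{j}=\max_{i}\delta^{A}_{i,j}$ is required (non-commitment among $A$'s indifferent best responses, reading ``guarantee'' in the statement) unpacks precisely the content of Step~2 that the paper leaves implicit.
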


It is now up to player $A$ to decide whether to make the respective offer leading to the value $v^{A}(\widehat{\mathcal{G}})$ -- if that offer would improve her current expected value -- or to pass, possibly by making only a vacuous offer, for the sake of indicating to $B$ on which of the several equivalent Nash equilibria to coordinate (as in the symmetric coordination game), when appropriate. 

\begin{example}[Solving a game by exchange of unconditional offers]
~ 

Consider the following NFG $\mathcal{G}$ between players $R$ (row) and $C$ (column):  
\[
\begin{game}{2}{3}
            & $C1$  & $C2$ & $C3$ \\
$R1$   & $2,10$   &$10,4$ & $5,1$\\
$R2$   & $6,0$   &$4,4$ & $6,3$ \\
\end{game}\hspace*{\fill}%
\]

This game has no  pure strategy NE. The maxmin solution is $(R2,C2)$ with payoffs $(4,4)$, which is not Pareto optimal.

Suppose, player $R$ is to make the first offer. Let us compute the best offer that $R$ can make to $C$.  (We will often denote $d+\epsilon$ by $d^{+}$ and $d-\epsilon$ by $d^{-}$.) 

\begin{itemize}
\itemsep = -2pt
\item The best response of $R$ to $C1$ is $R2$. \\
So, $\delta^{R}_{2,1} = 4 - 0 +  \epsilon = 4^{+}$ 
and 
$v^{R}(\widehat{\mathcal{G}}_{C1})= 6 - 4^{+}= 2^{-}$ % ^{-}$. 

\item  The best response of $R$ to $C2$ is $R1$. \\
So, $\delta^{R}_{1,2} = 10 - 4 +  \epsilon = 6^{+}$ and 
$v^{R}(\widehat{\mathcal{G}}_{C2})= 10 - 6^{+} = 4^{-}$. 

\item  The best response of $R$ to $C3$ is $R2$. \\
So, $\delta^{R}_{2,3} = 4 - 3 +  \epsilon = 1^{+}$ and 
$v^{R}(\widehat{\mathcal{G}}_{C3})= 6 - 1^{+} = 5^{-}$. 
\end{itemize}

Thus, $v^{R}(\widehat{\mathcal{G}})= v^{R}(\widehat{\mathcal{G}}_{C3}) = 5^{-}$, meaning that $R$'s best offer to $C$ is $R \xlongrightarrow{ 1^{+} \ \slash \ C3} C$.
The resulting transformed game is 
\[
\begin{game}{2}{3}
            & $C1$  & $C2$ & $C3$ \\
$R1$   & $2,10$   &$10,4$ & $4^{-},2^{+}$\\
$R2$   & $6,0$   &$4,4$ & $5^{-},4^{+}$ \\
\end{game}\hspace*{\fill}%
\]
It has one Nash equilibrium $(R2,C3)$ yielding payoffs $(5^{-},4^{+})$ 
which are strictly better than the players maxmin values, but not yet Pareto optimal. 

Now, let us compute the best offer of $C$ to $R$ in the transformed game.

\begin{itemize}
\itemsep = -2pt
\item The best response of $C$ to $R1$ is $C1$ and 
$\delta^{C}_{1,1} 
= 4^{+}$. 
So, $v^{C}(\widehat{\mathcal{G}}_{R1})= 10 - 4^{+} = 6^{-}$.  

\item  The best response of $C$ to $R2$ is $C3$ and $\delta^{C}_{2,3} = 0$.
Thus, $v^{C}(\widehat{\mathcal{G}}_{R2})= 5^{-}$.  
\end{itemize} 

So, $v^{C}(\widehat{\mathcal{G}})=  6^{-}$, which is better than $C$'s current value of $4^{+}$. Thus, $C$ can improve his value by making the offer $C \xlongrightarrow{4^{+} \ \slash \ R1} R$. The resulting transformed game is 
\[
\begin{game}{2}{3}
            & $C1$  & $C2$ & $C3$ \\
$R1$   & $6^{+},6^{-}$   &$14^{+},0^{-}$ & $8,-2$\\
$R2$   & $6,0$   &$4,4$ & $5^{-},4^{+}$ \\
\end{game}\hspace*{\fill}%
\]
It has one Nash equilibrium $(R1,C1)$, where the strategy $R1$ is strictly dominant for $R$, yielding payoffs $(6^{+},6^{-})$ which are strictly better than the previous ones of  $(5^{-},4^{+})$, but not yet Pareto optimal.  So, let us see whether $R$ can improve any further the resulting game, given that the strategy $R1$ is already his best response to all strategies of $C$:    

\begin{itemize}
\itemsep = -2pt
\item For $C1$:  
$\delta^{R}_{1,1} = 0$ and $v^{R}(\widehat{\mathcal{G}}_{C1})= 6^{-}$  

\item For $C2$:  $\delta^{R}_{1,2} = 6^{-} - 0^{-} +  \epsilon = 6^{+}$ and 
$v^{R}(\widehat{\mathcal{G}}_{C2})= 14^{+} - 6^{+}  = 8$. 

\item For $C3$: 
$\delta^{R}_{1,3} = 6^{-} - 2 +  \epsilon = 8$ and 
$v^{R}(\widehat{\mathcal{G}}_{C3})= 8 - 8 = 0$. 
\end{itemize}

Thus, $v^{R}(\widehat{\mathcal{G}})= v^{R}(\widehat{\mathcal{G}}_{C2}) = 8$,
which is better than $R$'s current value of $6^{+}$, hence $R$'s best offer to $C$ now is $R \xlongrightarrow{6^{+} \ \slash \ C2} C$.
The resulting transformed game is 
\[
\begin{game}{2}{3}  
            & $C1$  & $C2$ & $C3$ \\
$R1$   & $6^{+},6^{-}$   & $8,6$  & $8,-2$\\
$R2$   & $6,0$               & $-2^{-},10^{+}$ & $5^{-},4^{+}$ \\
\end{game}\hspace*{\fill}%
\]
It has a strictly dominant strategies equilibrium $(R1,C2)$ yielding payoffs $(8,6)$ which are strictly better than the previous ones $(6^{+},6^{-})$. In fact, this is the only Pareto maximal outcome in the game, and one can now check that none of the players can make any further improving offers. Thus, this is the end of the negotiation phase. 

We leave to the reader to check that, if $C$ makes the first offer, the negotiation phase will end with a slightly different game but with the same solution, and after each player making only one offer. As we will see further, such confluence is not always the case.  
\end{example}

\subsection{Weakness of unconditional offers} 

The example above demonstrates the potential power of unconditional offers to solve normal form games. On the other hand, the version of the Prisoners' Dilemma game in Figure \ref{PD2}  demonstrate their weakness, showing that in preplay negotiation games where no conditional offers and no withdrawals are allowed the players may be unable to reach any Pareto optimal outcome by means of exchanging feasible preplay offers. Moreover, the expected value of the game, that a player can achieve by making an effective unconditional offer in such a preplay negotiations game, can be  worse than the original expected value of the game yielded by the maxmin strategy profile, for \emph{every} player. 
%In fact, it can be strictly worse, for \emph{every} player, than the value yielded by the maxmin strategy profile in the original game, as shown by the following example. 
%\vnote{Removed the example.} 
%%%%%%%%
\vcut{
\begin{example}[No player benefits by making an effective offer]
~

Consider the following game $\mathcal{G}$ between players $R$ (row) and $C$ (column): 
\[
\begin{game}{3}{2}
             & $C1$  & $C2$\\
$R1$   & $3,3$   &$1,1$\\
$R2$   & $9,1$   &$0,8$ \\
$R3$   & $0,7$   &$8,1$ \\ 
\end{game}\hspace*{\fill}%
\]

Again, the game has no  pure strategy NE and the maxmin solution is $(R1,C1)$ with payoffs $(3,3)$, but it is not Pareto optimal.
The players have the potential to negotiate a mutually better deal in any of the outcomes in rows 2 and 3. However, it turns out that none of them can make a first unconditional offer that would improve 
his expected payoff. Indeed, computing their best offers according to the procedure outlined above produces the following: 

\begin{itemize}
\itemsep = -2pt
\item The best response of $R$ to $C1$ is $R2$. Then, $\delta^{R}_{1} = 8-1 +  \epsilon = 7^{+}$ and 
$v^{R}(\widehat{\mathcal{G}}_{C1})= 9 - 7^{+} = 2^{-}$. 
Likewise, $\delta^{R}_{2} = 6^{+}$ and $v^{R}(\widehat{\mathcal{G}}_{C2})= 8 - 6^{+} = 2^{-}$.  
\item Respectively, 
$v^{C}(\widehat{\mathcal{G}}_{R1})= 3 - 6^{+}  = -3^{-}$; 
$v^{C}(\widehat{\mathcal{G}}_{R2})= 8 - 8^{+}  = 0^{-}$; \\
$v^{C}(\widehat{\mathcal{G}}_{R3})= 7 - 9^{+}  = -2^{-}$.
\end{itemize}

Thus, $v^{R}(\widehat{\mathcal{G}})= 2^{-}$ and   
$v^{C}(\widehat{\mathcal{G}})= 0^{-}$. 
Both values are less than the respective maxmin values of 3. Therefore, no player is interested in making a first offer and the negotiation phase ends at start. 
\end{example}
}

\subsection{The disadvantage of making the first unconditional offer}

Even when each of the players can start an effective negotiation ending with a solved game, the solution may essentially depend on who makes the first effective offer, as shown by the next example.  

\begin{example}[Making the first offer can be disadvantageous]
The reader can check that in the following game between $R$ and $C$ 
\[
\begin{game}{3}{2}
             & $C1$  & $C2$\\
$R1$   & $1,8$   &$10,4$\\
$R2$   & $4,10$   &$1,11$ \\
$R3$   & $4,0$   &$2,2$ \\ 
\end{game}\hspace*{\fill}%
\]
if the first offer is made by $R$ the preplay negotiation game ends with 
\[
\begin{game}{3}{2}
             & $C1$  & $C2$\\
$R1$   & $1,8$   &$6^{-},8^{+}$\\
$R2$   & $4,10$   &$-3^{-},15^{+}$ \\
$R3$   & $4,0$   &$-2^{-},6^{+}$ \\ 
\end{game}\hspace*{\fill}%
\]
where the only acceptable 
(surviving iterated elimination of strictly dominated strategies) outcome is $(R1,C2)$ yielding payoffs $(6^{-},8^{+})$, whereas if the first offer is made by $C$ the preplay negotiation game ends with 
\[
\begin{game}{3}{2}
             & $C1$  & $C2$\\
$R1$   & $4^{+},5^{-}$   &$9,5$\\
$R2$   & $4,10$   &$-3^{-},15^{+}$ \\
$R3$   & $4,0$   &$-2^{-},6^{+}$ \\ 
\end{game}\hspace*{\fill}%
\]
where the only acceptable outcome is again $(R1,C2)$, but now yielding payoffs $(9,5)$. Note that in both cases the disadvantaged player is the one who has made the first offer. 
\end{example}

The example above also indicates that, in the case under consideration, the greedy approach, where a player always makes the best effective offer he can, may not be his best strategy. Passing the turn to the other player -- that is, making a vacuous offer -- could be strategically more beneficial. On the other hand, if both players keep exchanging only vacuous offers or passing, then they will never improve their expected values of the starting game. Yet, one can check that any pair of strategies in the example above, whereby one of the player takes the initiative by making the first effective move with his best first offer and thereafter always responds with his currently best effective offers until possible and then passing, while the other player remains passive until that happens and thereafter keeps responding with her best offers until possible and then passing, is a subgame-perfect equilibrium strategy in the preplay negotiation phase for that game.

\paragraph{Stocktaking}
We have demonstrated that, on the one hand, by exchanging only unconditional and irrevocable offers players can often achieve mutually better outcomes of normal form games, but on the other hand their bargaining powers to achieve their \emph{best}  outcomes in such games can be substantially affected by the potential disadvantage of making the first effective offer in such games. 
%\vnote{Added and revised:}
Consequently, the strategy profile based on always making the currently best offer need not always be a Nash equilibrium.
%\vnote{To revise:}
We therefore believe that the analysis of PNG with unconditional offers warrants the use of equilibria that go beyond the IRA assumptions. Also, features such as using vacuous offers for signaling a future intention on the strategy to be played become essential. Such analysis should take into account equilibria generated both by forward induction-like reasoning, where past moves can be used to justify rational behavior in the future (see \cite{OR}). We leave the overall analysis of this case and the further investigation of the best negotiation strategies and the analysis of the effect of valuable time, to future work. % We now proceed with the study of more powerful preplay negotiation games where players can make conditional offers.  
%\vnote{The reviewer asks: ``For preplay negotiations with unconditional offers, can one ensure existence of equilibrium? ''. To add comment. }
It is not yet known precisely what additional conditions guarantee existence of pure strategy Nash equilibrium for preplay negotiaitons with unconditional offers. This question is left to future study.

\section{%Analyzing 
Two-players preplay negotiation games with conditional offers}
\label{sec:2-player-cond}

%We only state some technical results on the two-player case here. % Proof sketches can be found in \cite{GT-manifestoArxiv} and detailed proofs will appear in \cite{Preplay-tech1}.  

%As discussed earlier, there are several important optional assumptions that can be adopted or not, and that choice would affect essentially the nature and the solutions of the preplay negotiations games. %The most important ones are: 
%\begin{enumerate}
%\itemsep = 1pt
%\item Whether time is valuable.
%\item Whether conditional offers are allowed. 
%\item Whether offers can be withdrawn later.  
%\item Whether generalized offers (i.e., simultaneous multiple offers to several other players) are allowed. 
%\end{enumerate}

%\vcut{
%Depending on the choices for each of these, 16 possible cases arise. In the case of 2 player games, the last item is of no particular importance, so we will not discuss it in detail here, but will simply assume that generalized offers are allowed. We analyze only the more interesting of the resulting 8 cases and only briefly discuss the others. 
%
%Before doing this, though, 
%}

In this section we allow the possibility of players to make conditional offers to each other and 
obtain results about the efficiency of the resulting negotiation process and its possible outcomes, under several optional assumptions. The content of this section extends our work in \cite{GT-LORI2013}.

Before analyzing some cases with additional optional assumptions, let us state a useful general result, also valid in the case of many players PNG. 
An extensive form game is said to have the {\bf One Deviation Property} (ODP) \cite[Lemma 98.2]{OR} if, in order to check that a strategy profile is a Nash equilibrium in (some subgame of) that game, it suffices to consider the possible profitable deviations of each player not amongst {\em all} of its strategies (in that subgame), but only amongst the ones differing from the considered profile in the first subsequent move.
% (in that subgame).
% Formally it sufficies to replace the last line of Definition \ref{def:sub} with the following "for every strategy $\sigma_i$ available to player $i$ in the subgame $\mathcal{E}_h$ and differing from $\sigma^{*}_i$ only in the action it prescribes at history $h$.".

\begin{lemma}%[ODP for PNG]
\label{ODP}
Every PNG has the One Deviation Property.
\end{lemma}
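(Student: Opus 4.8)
The plan is to invoke the standard One Deviation Property argument for extensive-form games of perfect information, as in \cite[Lemma 98.2]{OR}, and verify that the two structural hypotheses it requires are met by every PNG. The classical result states that in an extensive form game of perfect information the ODP holds provided (i) the game has no infinite play that matters for the deviation argument in a problematic way, or more precisely the payoffs are ``continuous at infinity'', and (ii) each player moves at countably many histories. For finite-horizon games the ODP follows by backward induction on the length of the subgame; the only subtlety for a PNG is that it can have infinite histories, so I would need to handle the continuity-at-infinity condition.

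First I would unpack what needs to be shown: fix a PNG $\mathcal{E}$, a subgame rooted at a finite history $h$, and a strategy profile $\sigma$ that is ``one-deviation-stable'' in that subgame, i.e., no player can strictly profit in any subgame by changing $\sigma$ only at the first move of that subgame. I must show $\sigma$ is a Nash equilibrium of the subgame at $h$, i.e., no player $i$ has any profitable deviation $\sigma_i'$ whatsoever. Suppose for contradiction some $\sigma_i'$ yields $\bfu_i$ strictly greater than $\sigma_i$ against $\sigma_{-i}$ in the subgame at $h$. If the resulting play $\out(\sigma_i', \sigma_{-i})$ from $h$ is finite, I truncate the deviation at the last history where $\sigma_i'$ differs from $\sigma_i$ and push the difference backward one move at a time — a finite induction on the number of deviating moves — contradicting one-deviation-stability at the subgame rooted just before the last deviating move. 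This is the routine part.

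The main obstacle, and the place I would spend the real effort, is the case where the deviating play $\out(\sigma_i',\sigma_{-i})$ from $h$ is infinite. Here I would exploit the specific payoff structure of a PNG: by the definition of $\bfu$ in Definition \ref{def:PNG}, for an infinite terminal history $z'$ we have $\bfu_i(z') \le \bfu_i(z)$ for every finite terminal history $z$ and every player $i$ (the ``no disagreement is better than any agreement'' clause, with the convention $-\infty$ in the extreme version). In particular, an infinite play gives player $i$ the worst possible payoff; so a deviation leading to an infinite play can only be profitable if $\sigma$ itself already leads to an infinite (hence worst) play from $h$, and then the deviation cannot be \emph{strictly} better — contradiction. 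Hence any strictly profitable deviation must produce a finite play, reducing everything to the finite case handled above. I would also note that whether $\sigma$'s own play from $h$ is finite or infinite, a profitable deviation to a finite play of value $v$ requires $v > \bfu_i(\out(\sigma))$, and since only finitely many of $i$'s moves along that finite path differ from $\sigma_i$, the backward pushing argument applies verbatim.

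Finally I would assemble these pieces: the ODP for Nash equilibrium in every subgame is exactly what is needed, since a strategy profile is an SPE iff it induces a Nash equilibrium in every subgame, and the ODP lets one check SPE locally. I expect the write-up to be short: one paragraph citing \cite[Lemma 98.2]{OR} and reducing to its hypotheses, one paragraph discharging the infinite-play case via the payoff monotonicity built into the PNG definition, and a closing sentence. The one genuinely load-bearing observation is that infinite plays are $\le$ all finite plays in every coordinate, which is precisely what the PNG definition guarantees, so no continuity-at-infinity gymnastics beyond this monotonicity are needed.
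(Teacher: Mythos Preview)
Your proposal is correct and follows essentially the same approach as the paper: both invoke \cite[Lemma 98.2]{OR} for the finite part and dispatch the infinite-play case via the clause in Definition~\ref{def:PNG} that no disagreement is better for any player than any agreement. The paper's proof is more compressed --- it passes through the restriction $\mathcal{E}_f$ to finite histories and then extends --- whereas you argue directly by case-splitting on whether the deviating play is finite or infinite, but the load-bearing observation is identical.
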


\begin{proof}
Let $\mathcal{E}=(
%\{A,B\}, 
\N, \act, \hist, \mathit{turn},\{\Sigma_A,\Sigma_B\},  o, \mathcal{G}, \solc, \bfg, \bfu)$ be a PNG, and $\hist_f \subseteq \hist$ the set of finite histories in $\hist$. Let moreover  
$\mathcal{E}_{f}$ be the restriction of $\mathcal{E}$ to $\hist_f$, where the individual components are defined in the expected way. %Let $\sigma, \sigma^{\prime}, \sigma^{*}$ be strategy profiles in $\mathcal{E}_{f}$.
%
%By the fact that no agreement is worse than disagreement (Definition \ref{def:constraints}) we have that for each $h \in \hist_f$,
%\\
%$\bfu_{\turn(h)} o|_h(\sigma_{i}|_h, \sigma^{*}_{-i}|_h) \geq \bfu_{\turn(h)}o|_h(\sigma^{\prime}_i, \sigma^{*}_{-i}|_h) $ if and only if 
%${\bfu_f}_{\turn_f(h)} o_f|_h(\sigma_{i}|_h, \sigma^{*}_{-i}|_h) \geq {\bfu_f}_{\turn_f(h)}o_f|_h(\sigma^{\prime}_i, \sigma^{*}_{-i}|_h) $. %\vnote{This notation is inhumane... And, are we not in the two-player case?}
But $\mathcal{E}_{f}$ is a game of finite horizon, and by \cite[Lemma 98.2]{OR} it has the One Deviation Property. But by the fact that 
no disagreement is better for any player than any agreement, 
%no agreement is worse than disagreement 
(Definition \ref{def:PNG}) then $\mathcal{E}$ has that property, too.\end{proof}
%In particular, notice that a strategy profile $\sigma$ is a subgame perfect equilibrium of $\mathcal{E}$ if and only if it is a subgame perfect equilibrium of $\mathcal{E}_f$. 

Furthermore, to analyze equilibrium strategies of PNG  we
consider so called 
%a special type of strategies where players display some coherence in playing. We say that players have 
{\bf stationary acceptance strategies} where players have a minimal acceptance threshold 
%(each player accepts any conditional offer that guarantees him at least some amount 
$d$ and a minimal passing threshold 
%(each player, when allowed, passes at histories associated to games guaranteeing them at least some amount 
$d^{\prime}\geq d$ (both of which may vary among the players).

\subsection{Conditional offers with non-valuable time}

The value for a player of a history in a PNG is the value for the player of the current NFG associated with that history. When time is not valuable players assign the same value to the NFG associated with the current moment and the same game associated with any other moment in the future, which means that players can afford delaying or 
%composing generalized offers over time by a sequence of simple offers as well as making and 
withdrawing offers at no extra cost. 
%Therefore, the issue of allowing or not generalized offers and their withdrawals is of marginal importance. 
%So, we can assume that  withdrawals of offers are allowed, but will argue further that in {\em efficient negotiations} withdrawals are avoided. 

\begin{proposition}\label{prop:stationary}
Every SPE  
%subgame perfect equilibrium 
strategy profile of stationary acceptance strategies of a two-player PNG with non-valuable time is strongly efficient.

\end{proposition}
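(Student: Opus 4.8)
\textbf{Proof plan for Proposition~\ref{prop:stationary}.}
The strategy is to fix an SPE strategy profile $(\sigma_A, \sigma_B)$ of stationary acceptance strategies in a two-player PNG $\mathcal{E}$ with non-valuable time, look at the play $z = \out(\sigma_A,\sigma_B)$ it generates, and show that the outcome NFG $\bfg(z)$ has a payoff vector $(\bfu_A(z), \bfu_B(z))$ that is a redistribution of the payoff vector of a maximal outcome of the starting game $\mathcal{G}$. I would first record two consequences of the set-up: (i) by Lemma~\ref{ODP} the profile satisfies the One Deviation Property, so at every finite history it suffices to rule out single-move deviations; and (ii) since no disagreement is better for any player than any agreement and time is non-valuable, the play $z$ must be finite and end with both players passing (an infinite/disagreement play would be strictly improvable for the player who can instead accept/pass into the current NFG, contradicting that $(\sigma_A,\sigma_B)$ is an equilibrium). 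So $z$ is a finite play ending in mutual passes, and $\bfg(z)$ is some OI-transformation $\mathcal{G}(X)$ of $\mathcal{G}$.

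Next I would argue efficiency of the attained payoff vector. Because OI-transformations only redistribute payoffs and preserve the sum in every outcome (fact~1 after the POI-transformation definition), it suffices to show that the sum $\bfu_A(z)+\bfu_B(z)$ equals the maximal achievable sum over all outcomes of $\mathcal{G}$, i.e. the sum at a maximal (Pareto-optimal, max-sum) outcome. Suppose for contradiction it were strictly smaller. Then there is an outcome $\sigma^{*}$ of $\mathcal{G}$ whose total payoff strictly exceeds the current one, and I would show that from the history $h$ preceding the terminal passes, the player whose turn it is can profitably deviate: using fact~3 (players can, by sufficiently high conditional offers, make any designated outcome with any redistribution of its payoffs a strictly dominant-strategy equilibrium), the mover proposes a conditional offer whose suggested transformation turns $\sigma^{*}$ into the unique dominant-strategy equilibrium, with the surplus split so that \emph{both} players strictly gain relative to $\bfg(z)$. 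Here the stationary-acceptance hypothesis does the work: the recipient's acceptance threshold $d$ and passing threshold $d'$ are fixed numbers, so the proposer can choose the redistribution to clear the recipient's passing threshold (and hence get accepted and passed on), while still leaving the proposer strictly better off. Unwinding the ODP, this single-move deviation is profitable, contradicting SPE. Hence the sum is maximal, and by payoff-preservation $(\bfu_A(z),\bfu_B(z))$ is a redistribution of a maximal outcome's payoff vector, i.e. the profile is strongly efficient.

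I would also need to address the degenerate subcase where no outcome strictly Pareto-improves on $\bfg(z)$ but $\bfg(z)$ still is not max-sum — but this cannot happen, since a max-sum Pareto-optimal outcome always exists and, if the current sum is not maximal, moving to (a suitable redistribution of) such an outcome strictly helps both players, which is exactly the deviation constructed above; so the only stable sinks are the max-sum ones. Finally, strong efficiency as defined only concerns the payoff \emph{vector} of the outcome, so I do not need the outcome NFG itself to be $\solc$-perfectly solved — just that its equilibrium payoff vector is a redistribution of a maximal outcome's, which is what the sum argument plus fact~1 delivers.

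The main obstacle I anticipate is the interaction between the \emph{stationarity} restriction on acceptance strategies and the freedom needed to construct the profitable deviation: I must be careful that the deviating move is itself consistent with what the opponent's \emph{fixed} stationary thresholds will do in response (accept-then-pass), and that after the deviation the continuation value computed via the SPE structure is genuinely the value of the newly proposed NFG rather than something the opponent could further manipulate — which is precisely why non-valuable time and the ODP are invoked, since they collapse the continuation analysis to the one-shot comparison of $\val^{\solc}$-values of the current versus proposed NFGs.
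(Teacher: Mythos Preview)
Your proposal is correct and follows essentially the same route as the paper's proof: assume for contradiction that the SPE outcome $d^-$ is not a redistribution of a maximal outcome, pick a maximal redistribution $d^*$ in which both players strictly gain, and use stationarity plus the ODP to argue that the proposer at the final offer step has a profitable one-shot deviation to $d^*$, which the opponent will accept and pass on. One small sharpening: you do not need to ``choose the redistribution to clear the recipient's passing threshold'' as if that threshold were an unknown to be engineered around; the cleaner observation (which the paper uses implicitly) is that the recipient already accepted and passed on $d^-$, so by stationarity their passing threshold is at most their share in $d^-$, and hence any $d^*$ giving them strictly more is automatically accepted and passed---this also dispenses with any worry that the threshold might be too high to accommodate both players' gains.
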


\begin{proof}
Suppose not. Let $d^{-}$ be a vector of expected values that is not the redistribution of a maximal outcome of the starting game, associated to some SPE 
%subgame perfect equilibrium 
strategy profile. Such strategy profile yields a history $h$ that ends with: 1) proposal of $d^{-}$;  2) acceptance of that proposal;  3) pass; 4) pass. 
Consider now some redistribution $d^{*}$ of a maximal 
outcome where both players get more than in $d^{-}$ and the  history $h$ with the the last four steps substituted by: 1) proposal of $d^{*}$; 2) acceptance of that proposal; 3) pass; 4) pass. By stationarity of strategies and the ODP, the player moving at step 1) is better off deviating from $d^{-}$ and instead proposing $d^{*}$: a  contradiction. 
\end{proof}

%\pnote{I do not know if you would still call this a proof sketch, for me it is a proof, with absolutely no (serious) ambiguity. But feel free to disagree :). In general, up to Proposition 13, I did not change much. So many arguments are probably still to informal for your taste. I believe they are all correct and everyone can see this, but I would like to hear your opinion on the specific points.}
%\vnote{This is a perfect proof: precise, short and without unnecessary technical overhead. Keep it like that.}

\medskip
The condition of stationarity of acceptance strategies is needed if we want to prevent SPE that lead to inefficiency. Indeed, if players were not adhering to stationary acceptance strategies there could be a {\em suboptimal} outcome, guaranteeing for both players expected values respectively of $d_A$ and $d_B$. To enforce that outcome it then suffices to design a strategy profile whereby off the equilibrium path player $A$ threatens player $B$ with a {\em stubborn} but maximal stationary acceptance strategy giving him less than $d_B$, while player $B$ threatens $A$ with an expected payoff of strictly less than $d_A$. So, if players are not obliged to be consistent in their acceptance policies, $d_A$ and $d_B$ can be the result of a subgame perfect equilibrium strategy. 
%For further details and an example illustrating inefficient negotiations with no stationary strategies see \cite{GT-manifestoArxiv}.

The example below provides a detailed instance of such games.

\begin{example}[Attaining inefficiency]\label{exa:inefficiency}

In what follows we say that a player `proposes a given outcome with a given payoff distribution' to mean that the player makes a conditional offer which, when accepted, would make that specific outcome, with that specific distribution of the payoffs, the unique (dominant strategy equilibrium) outcome in the solution of the transformed game. 
More generally, we say that a player "proposes a payoff distribution" to mean that the player  makes a conditional offer which, when accepted, would make that specific payoff distribution the vector of expected utilities of the players.

Consider the starting NFG on Figure \ref{fig:inefficiency}. As there are no dominant strategy equilibria, there are acceptable solution concepts assigning $2$ to each player.
\begin{figure}[htb]\hspace*{\fill}%
\begin{game}{2}{2}
      & $L$    & $R$\\
$U$   &$2,2$   &$4,3$\\
$D$   &$3,3$   &$2,2$
\end{game}\hspace*{\fill}
\caption{Attaining inefficient divisions}
\label{fig:inefficiency}
\end{figure}

We now will construct a strategy profile of the PNG starting from that game,  
%in Figure \ref{fig:inefficiency} 
that is a SPE strategy profile and attains an inefficient outcome: 

\begin{enumerate}
\item
At the root node player $A$ proposes outcome $(D,L)$ with payoffs $(3,3)$. 
\item
After such proposal player $B$ accepts. However, if $A$ had made a different offer (so, off the equilibrium path)  $B$ would reject and keep proposing outcome $(U,R)$ with  distribution  of $5$ for him and $2$ for $A$ and accepting (and passing on) maximal outcomes guaranteeing him at least $5$. $A$, on the other hand, would not have better option than proposing the same distribution ($5$ for $B$ and $2$ for her) and accepting only maximal outcomes guaranteeing her at least $2$. Notice that once they enter this subgame neither $A$ nor $B$ can profitably deviate from such distribution.
\item
 If, however, $B$ did not accept the $(3,3)$ deal then $A$ would keep proposing outcome $(U,R)$  with a redistribution of $(5,2)$ ($5$ for her, $2$ for him) and accepting at least that much.  Respectively, $B$ would also stick to the same distribution, accepting at least $2$. Again, no player can profitably deviate from this 
stationary strategy profile starting from $B$'s rejection.
 %{\em stationary} strategy profile is adopted.

\item
After player $B$ has accepted the deal $(3,3)$, then $A$ passes. If $A$ did not pass, player $B$ would go back to his $(2,5)$ redistribution threat. 

Likewise with the next round. That eventually leads to the inefficient outcome 
$(3,3)$. 
\end{enumerate}

It is easy to check that the strategy profile described above is a SPE. 
%subgame perfect equilibrium. 
%\vnote{How about withdrawals, can't they spoil this weird equilibrium?} 
No player can at any point deviate profitably by proposing the outcome $(U,L)$ with dominating payoff distribution, e.g., $(3.5, 3.5)$ . 

\end{example}

\medskip
We first focus on PNG where the opt out option is not available, and introduce it as an additional feature later on.

\paragraph{Negotiations without 'opt out' moves}

In PNG with non-valuable time and without the possibility of opting out every  redistribution of a maximal outcome can be attained as a solution.

\begin{proposition}\label{prop:subgame1}
Let $\mathcal{E}$ be PNG with non-valuable time starting from a NFG $\mathcal{G}$ and 
\label{prop1} let $d=(x_A,x_B)$ be any redistribution of a maximal outcome of the starting 
NFG.
%normal form game. 
The following strategy profile $\sigma = (\sigma_{A},\sigma_{B})$ is a SPE: 
%subgame perfect equilbrium:
%\begin{itemize}
%\itemsep = 1pt
%\item[] 

For each player $i \in \{A,B\}$:
\begin{itemize}
\itemsep = -2pt
\item if $i$ is the first player to move, he proposes a transformation of $\mathcal{G}$ where the vector of expected values in the transformed game is $d$;
\item when $i$ can make an offer and the previously made offer has not been accepted, he proposes a transformation of the current NFG where the vector of expected values in the transformed game is $d$; 
\item when $i$ can make an offer and the previously made offer has been accepted, he passes;
\item when $i$ has a pending offer of a suggested transformation where the vector of expected values in the transformed game is $d^{\prime}$, he accepts it if and only if $x^{\prime}_i \geq x_i$, and rejects it otherwise;
\item $i$ never withdraws any previously made offer; 
%\vnote{? We consider the case with no withdrawals, don't we?} 
\item if $i$ can pass and the other player has just passed, he passes; 
\item if $i$ can pass and the opponent has not just passed, $i$ proposes $d$; 
\item if $i$ has just accepted a proposal he passes;
\end{itemize}
%\end{itemize}
\end{proposition}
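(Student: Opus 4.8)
The plan is to verify that the described strategy profile $\sigma$ is a subgame perfect equilibrium by invoking the One Deviation Property (Lemma \ref{ODP}), so that I only need to check that no player can profitably deviate by a single move at any history. I would organize the argument by the type of history reached, i.e. by which clause of the definition of $\sigma$ governs the player to move, and in each case compare the payoff from following $\sigma$ with the payoff from each admissible one-shot deviation, using that after the deviation both players revert to $\sigma$. Throughout I will use the fact, established by Proposition \ref{prop:stationary}'s proof technique and the structure of PNG, that when both players follow the stationary acceptance/offer behaviour prescribed here, the play terminates with the transformed game whose vector of expected values is $d$, giving player $i$ payoff $x_i$; and that $d$ is by hypothesis a redistribution of a maximal outcome, hence no NFG reachable by further offers can give both players strictly more than $d$ simultaneously.

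The key cases I would run through are the following. First, at a history where $i$ is to make an offer and no outstanding offer is pending (either $i$ moves first, or the previous offer was rejected): following $\sigma$, $i$ proposes $d$, the opponent accepts (since the opponent's acceptance rule says accept iff the offered share is $\geq$ its threshold, and $d$ meets it), and then both pass, yielding $x_i$. A one-shot deviation to a different offer $d''$ leads, after reversion to $\sigma$, either to rejection by the opponent (if $d''$ gives the opponent less than $x_{-i}$) followed by the opponent re-proposing $d$ and the process converging to $x_i$ anyway — so no gain — or to acceptance (if $d''$ gives the opponent at least $x_{-i}$), but then, since $d$ is a redistribution of a maximal outcome, $d''$ cannot give $i$ strictly more than $x_i$ while still giving $-i$ at least $x_{-i}$; a deviation to passing just prolongs the game without changing the eventual outcome. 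Hence no profitable deviation. Second, at a history where $i$ has a pending offer: $i$'s rule is to accept iff its offered share $x'_i \geq x_i$. If $x'_i \geq x_i$, accepting (then passing) yields $x'_i \geq x_i$; rejecting triggers $i$'s own re-proposal of $d$ and convergence to $x_i \leq x'_i$ — no gain. If $x'_i < x_i$, accepting yields $x'_i < x_i$ whereas rejecting and re-proposing $d$ yields $x_i$ — so rejecting is (weakly, in the relevant sense) better, matching $\sigma$. Third, the passing histories: if the opponent has just passed, $\sigma$ tells $i$ to pass, ending the game at the currently accepted NFG; deviating to a new offer only moves away from an agreement already giving $x_i$, and by maximality cannot improve on $x_i$, while risking nothing better — and crucially, not passing and instead opening a new round of proposals, met by the opponent's reversion to $\sigma$, converges back to $x_i$. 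If the opponent has not just passed, $\sigma$ tells $i$ to re-propose $d$, which is covered by the first case. The "just accepted, then pass" clause is checked the same way as an ordinary pass.

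The main obstacle, and the place requiring the most care, is the bookkeeping around \emph{which} NFG is "currently accepted" versus "current" after a sequence of accepted offers and passes, together with making precise the claim that reversion to $\sigma$ from an arbitrary off-path history still converges to the outcome with value vector $d$ — in particular that the transformation realizing $d$ is still available from the current (possibly already modified) NFG, which rests on the algebra of OI-transformations: since OI-transformations only redistribute payoffs and can be composed and (here, without withdrawals being used on-path) any redistribution of a maximal outcome of $\mathcal{G}$ remains a redistribution of a maximal outcome of any OI-transform of $\mathcal{G}$, the proposer can always steer the current game to value vector $d$. I would isolate this as a small auxiliary observation and then the case analysis above becomes routine verification via the ODP.
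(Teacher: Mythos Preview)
Your proof is correct and takes essentially the same approach as the paper's: invoke the One Deviation Property (Lemma~\ref{ODP}) and then verify case by case that no one-shot deviation is profitable, with the acceptance case being the only one requiring a direct payoff comparison. Your treatment is in fact more careful than the paper's, which summarily asserts that for the non-acceptance cases ``the vector of payoffs \ldots will be $d$ anyway''; you correctly observe that a deviating proposal $d''$ with $d''_{-i}\geq x_{-i}$ would be \emph{accepted} and yield outcome $d''$ rather than $d$, and you close this case by invoking the maximality of $d$ (sums are preserved under OI-transformations, so $d''_{-i}\geq x_{-i}$ forces $d''_i\leq x_i$).
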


\begin{proof}
We have to show that there is no subgame where a player $i$ can profitably deviate from this strategy at its root. By Lemma \ref{ODP} it suffices to consider only first move deviations to the above described strategy. 

Suppose the player has a pending offer  that induces a transformation of the current NFG where the vector of expected values is $d^{*}$. If she accepts it then the outcome will be $d^*$, due to the definition of the strategy profile; if she rejects it, it will be the starting offer $d$. And she will accept if and only if she will get more from $d^{*}$ than from $d$. So the acceptance component is optimal. For the remaining cases, if player $i$ deviates from the prescribed strategy, due to the construction of the strategy and Lemma \ref{ODP}, the vector of payoffs associated to the outcome of $\mathcal{E}$ will be $d$ anyway.  
%\vnote{Space permitting, more detail could be added here.} 
\end{proof}

%As a consequence of the previous proposition we obtain: 

\begin{corollary}
\label{cor2} The game associated to the outcome of a subgame perfect equilibrium strategy profile consisting of stationary acceptance strategies in a two-player PNG with non-valuable time is optimally solvable. % (cfr. Appendix, Section \ref{sec:value}). 
\end{corollary}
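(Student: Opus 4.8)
The plan is to obtain this as a short consequence of Proposition~\ref{prop:stationary} together with the game-theoretic observations on OI-transformations from Section~\ref{sec:general}. Fix an SPE strategy profile $\sigma = (\sigma_A,\sigma_B)$ consisting of stationary acceptance strategies, let $z := \out(\sigma)$ be its outcome terminal history, and let $\mathcal{G}' := \bfg(z)$ be the associated outcome NFG; we must show that $\solc(\mathcal{G}')$ consists only of maximal outcomes of $\mathcal{G}'$.

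First I would record two facts. Since an OI-transformation leaves the sum of the players' payoffs unchanged in every outcome (observation~1 in Section~\ref{sec:general}), and $\mathcal{G}'$ is obtained from the starting game $\mathcal{G}$ by an OI-transformation, the two games have the same maximal payoff sum, call it $M$, and indeed the same set of maximal outcomes (an outcome is maximal exactly when its payoff sum equals $M$). Next, by Proposition~\ref{prop:stationary} the profile $\sigma$ is strongly efficient, i.e. the payoff vector it attains --- which by the definition of $\bfu$ in Definition~\ref{def:PNG} equals $(\bfu_A(z),\bfu_B(z)) = (\val^{\solc}_A(\mathcal{G}'),\val^{\solc}_B(\mathcal{G}'))$ --- is a redistribution of the payoff vector of a maximal outcome of $\mathcal{G}$, so $\val^{\solc}_A(\mathcal{G}') + \val^{\solc}_B(\mathcal{G}') = M$.

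The decisive step uses the structure of the equilibrium play, not strong efficiency alone: an SPE of stationary acceptance strategies terminates along its equilibrium path with an accepted \emph{passing} proposal followed by the two passes closing the PNG. By the conventions fixed in Example~\ref{exa:inefficiency}, such a proposal is a conditional offer that, once accepted, turns some outcome $\sigma^{*}$ --- carrying exactly the redistributed payoff vector above --- into the \emph{unique} strictly dominant-strategy-equilibrium outcome of $\mathcal{G}'$; that this is achievable by preplay offers is observation~3 in Section~\ref{sec:general}. Since any game with a strictly dominant strategy profile is $\solc$-solved for every acceptable solution concept, $\solc(\mathcal{G}') = \{\sigma^{*}\}$ and hence $\val^{\solc}_i(\mathcal{G}') = u_i(\sigma^{*})$ for $i \in \{A,B\}$. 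Combining with the previous paragraph, $u_A(\sigma^{*}) + u_B(\sigma^{*}) = M$, so $\sigma^{*}$ is a maximal outcome of $\mathcal{G}'$. Thus $\solc(\mathcal{G}')$ returns only the maximal outcome $\sigma^{*}$, which is precisely the assertion that $\mathcal{G}'$ is optimally $\solc$-solvable (in fact $\solc$-perfectly solved).

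The main obstacle is exactly this decisive step: one must argue that the accepted equilibrium-path proposal yields a $\solc$-solved game rather than merely a game with the prescribed vector of expected values. Strong efficiency by itself --- the bare equality $\val^{\solc}_A(\mathcal{G}') + \val^{\solc}_B(\mathcal{G}') = M$ --- does \emph{not} force optimal $\solc$-solvability for an arbitrary acceptable $\solc$, since a non-maximal outcome can co-exist in $\solc(\mathcal{G}')$ with maximal ones while the conservative maxmin values still sum to $M$; it is the installation of a \emph{unique} dominant-strategy equilibrium that excludes this. A fully rigorous write-up should therefore either (i) read ``proposing an outcome with a payoff distribution'' strictly as in Example~\ref{exa:inefficiency}, or (ii) establish a short lemma that in any SPE of stationary acceptance strategies the accepted equilibrium-path proposal can, without loss of generality, be taken in this canonical dominant-strategy form; everything else is bookkeeping.
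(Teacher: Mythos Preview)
The paper states this corollary without proof, evidently intending it as an immediate consequence of Proposition~\ref{prop:stationary}. Your argument follows the same route but is considerably more careful, and you have put your finger on a genuine subtlety that the paper leaves implicit: strong efficiency of $\sigma$ (Proposition~\ref{prop:stationary}) only tells you that $\val^{\solc}_A(\mathcal{G}') + \val^{\solc}_B(\mathcal{G}') = M$, and for an arbitrary acceptable $\solc$ this maxmin identity does \emph{not} by itself force every outcome in $\solc(\mathcal{G}')$ to be maximal --- one can have $\solc(\mathcal{G}')$ containing non-maximal outcomes while the two risk-averse expected values still pick out a maximal corner. Your resolution, reading ``proposing a payoff distribution'' in the sense fixed in Example~\ref{exa:inefficiency} (i.e.\ as a conditional offer that installs the target vector as the \emph{unique dominant-strategy equilibrium}), is exactly the convention the paper tacitly relies on throughout the proofs of Propositions~\ref{prop:stationary} and~\ref{prop:subgame1}; it is the only concrete content the paper ever gives to phrases like ``proposes $d$''. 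Under that reading $\mathcal{G}'$ is $\solc$-solved for every acceptable $\solc$, and your bookkeeping with the OI-invariance of payoff sums then finishes the job cleanly.

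So your proof is correct and in fact more rigorous than the paper's own treatment. The caveat you append --- that one must either adopt reading~(i) or supply a short lemma~(ii) --- is well taken and points to a looseness in the paper rather than in your argument; option~(i) is what the paper clearly intends.
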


%\paragraph{Stocktaking}
In summary, our analysis of two-player PNG with non-valuable time shows that efficiency can be attained when conditional offers are allowed and stationary acceptance strategies are followed. Indeed, any redistribution of the vector of payoffs of a maximal outcome can  be  made the unique solution of the final NFG 
%(Proposition \ref{prop:subgame1}) 
by such SPE strategies. However, non-stationary acceptance strategies may lead to inefficient equilibria,  as Example \ref{exa:inefficiency} clearly shows: there exist SPE 
%subgame perfect equilibrium 
strategy profiles  of a two-player PNG with conditional offers and non-valuable time where (i) offers are made that are not feasible, (ii) the vector of payoffs of the outcome it attains is not a redistribution of the vector of payoffs of a maximal outcome, i.e., it is not strongly efficient.

%\subsubsection{Negotiations with `opt out' moves}
\paragraph{Negotiations with `opt out' moves.}

To address the issues related to possible inefficiency we consider the possibility for players to make an {\em opt out} move and unilaterally put an end to the negotiations, by making the currently accepted NFG the outcome of the whole PNG. 

\begin{proposition}
Let $\sigma$ be a subgame perfect equilibrium strategy profile of a PNG with  {\em opt out} move and let $h$ be the resulting history. Then $\sigma$ guarantees to all players at least as much as they had in the currently accepted NFG;
in particular, at least as much as in the original game.
\end{proposition}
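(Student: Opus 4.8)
The plan is to exploit the one feature distinguishing this PNG from the ones studied before: the availability of an \emph{opt out} move, which terminates the negotiations leaving the \emph{currently accepted} NFG as the outcome game. I would phrase the statement slightly more generally, as a claim about subgames: for every history $h'$ of the PNG and every player $i$, the expected payoff for $i$ under $\sigma$ in the subgame rooted at $h'$ is at least $\val^{\solc}_i(\mathcal{G}^*)$, where $\mathcal{G}^*$ is the NFG currently accepted at $h'$. Reading this claim along the generated history $h$ gives the main assertion, and specialising it to $h'=\epsilon$, where by construction the currently accepted NFG is the starting game $\mathcal{G}$, yields the ``in particular'' part.

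First I would exhibit, in the subgame rooted at $h'$, an explicit \emph{deviation} available to player $i$ that secures her exactly $\val^{\solc}_i(\mathcal{G}^*)$ no matter what the other players do: from $h'$ onwards, $i$ rejects every pending conditional offer, never declares acceptance of the current NFG, and opts out at the first moment it is her turn. The crucial observation justifying this is that promoting a game to ``currently accepted'' requires the explicit acceptance of \emph{all} players, so $i$'s persistent refusal freezes the currently accepted NFG at $\mathcal{G}^*$ for the rest of the play; hence whether $i$ opts out herself, or some other player opts out first, the resulting outcome game is $\mathcal{G}^*$ and $i$ collects $\val^{\solc}_i(\mathcal{G}^*)$.

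Then I would close the argument by invoking subgame perfection (and, if convenient, the One Deviation Property of Lemma \ref{ODP}): since $\sigma_i$ is a best response to $\sigma_{-i}$ in the subgame at $h'$, it must do at least as well for $i$ as the deviation above does against $\sigma_{-i}$, i.e. at least $\val^{\solc}_i(\mathcal{G}^*)$; this is exactly the generalised claim.

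The main obstacle I anticipate is a well-definedness point: making sure the ``opt out at my first turn'' deviation is actually available and effective. This needs (i) that opting out is among the moves offered to the player in turn --- which holds, possibly after first disposing of any pending conditional offers by rejecting them --- and (ii) that the exogenous turn protocol does give player $i$ a move within any sufficiently long stretch, which is true for a cyclic order, so that the only way a play could avoid ever giving $i$ a turn is that it terminates earlier because some other player opts out, precisely the favourable case. I would make the cyclic/fair-protocol assumption explicit here and note that an infinite (disagreement) continuation cannot occur once $i$ commits to opting out at her first turn, so the deviation indeed always delivers the outcome game $\mathcal{G}^*$.
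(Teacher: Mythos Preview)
Your argument is correct and rests on the same key idea as the paper's proof: the \emph{opt out} move gives each player a deviation that secures the value of the currently accepted NFG, so subgame perfection forces the equilibrium outcome to do at least as well for every player. The paper compresses this into a one-line inductive observation along the sequence of accepted NFGs (``each currently accepted NFG must make each player better off than in the previous one; otherwise opting out would be a profitable deviation''), while you spell out the explicit subgame deviation and the protocol/well-definedness caveats; the content is the same, yours is just more detailed.
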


\begin{proof}
%Follows from the fact that, 
Starting with the original, automatically accepted game, each currently accepted NFG must make each player better off than in the previous one;  otherwise opting out would be a profitable deviation. \end{proof}

By introducing the possibility of opting out, the set of subgame perfect equilibria reduces further. Strategies, such as the one described in Example \ref{exa:inefficiency}
%Proposition \ref{prop:subgame1}, 
%\vnote{corrected:} 
%Example \ref{exa:inefficiency}, 
demanding an unreasonably high reward or an unreasonably low one for the proponent, will not be equilibria anymore.  However, this option does not solve the problem of attaining inefficiency, as the comment to Proposition \ref{prop:stationary} 
%game in Figure \ref{fig:inefficiency} again 
still applies. It has, however, several advantages: first, the equilibrium strategies of the PNG will guarantee for both players at least the expected payoff of the starting NFG; and second, the threat of opting out gives the players the possibility of making a more effective use of unconditional offers. 
%\vcut{: players could alone put an end to the negotiations without requiring their opponents' permissions, by first making an unconditional offer and then opting out. 
%}

%\paragraph{Remaining issues}

\medskip

To sum up, while SPE strategies in a two-player PNG can attain efficiency, some important issues are still remaining:

\begin{itemize}
\itemsep = -2pt
\item some SPE strategies, e.g., non-stationary acceptance strategies, are not strongly efficient.
%\vcut{, i.e., players can find it rational to display a form of inconsistency in their accepting and passing policies (non-stationary acceptance strategies);
%}

\item players can keep making unfeasible moves as a part of a SPE strategy, i.e.,  there are forms of equilibria where some players strictly decrease their expected payoff with respect to the original game; %\vnote{We have currently removed the notion of feasible moves/offers. Maybe, we should reintroduce it.} 

\item even strongly efficient strategies do not always yield perfectly solved games, as there is no notion of {\em most fair} redistribution of the payoff vectors in the solution of the original game.

\end{itemize}

Thus, when time is of no value, even the possibility of making conditional offers does not  guarantee that fair and efficient outcomes are ever reached. 
% in any reasonable amount of time. 

\subsection{Conditional offers 
%and withdrawals, 
with valuable time}

We will show here that when time is of value the problems mentioned above can be at least partially solved. To impose value on time we introduce for each player $i$ a {\em payoff discounting  factor $\delta_i \in (0,1)$} applied at every round of the PNG {\em associated to offers that are made} to his payoffs.  
%The discounting factor 
These factors measure the players' impatience, i.e., how much they value time, and  reduce the payoffs accordingly as time goes by. Thus, the players have no interest in delaying the negotiations by making redundant moves, sub-optimal or subsequently withdrawn offers.
%In particular, the issue of whether and when it would be beneficial for a player to withdraw an earlier offer becomes essential.  
The intuition now,  which we will justify further, is that  
\pcut{the only SPE strategy profiles for the preplay negotiation games with valuable time 
would consist of just 2 moves: 

\begin{enumerate}
\itemsep = -2pt
\item The first player to move makes his best conditional offer that the other player would ever accept (by adding for himself a small premium for the time saving). 
\item The other player accepts that conditional offer. 
\end{enumerate}

The reasoning behind this intuition is that,
} 
for the sake of time efficiency, in a SPE strategy profile:  

\begin{enumerate}
\item If a player intends to make an offer, she has never made any earlier offer that, if accepted, would give her a lesser value of the resulting game.  
%(That would  be a costly waste of time.) 

\item If any player is ever going to accept a given offer (or any other offer which is at least as good for her) 
%would give him at least the same expected value of the resulting NFG), 
she should do it the first time when she receives such offer. 
%(Again, otherwise he would have wasted valuable time.) 
\end{enumerate}

%\vcut{
%\begin{figure}[htb]\hspace*{\fill}%
%\begin{game}{2}{2}
%      & $L$    & $R$\\
%$U$   &$a,b $   &$c,d$\\
%$D$   &$e,f$   &$g,h$
%\end{game}\hspace*{\fill}
%\begin{game}{2}{2}
%      & $L$    & $R$\\
%$U$   &$a\delta_1,b\delta_2$   &$c\delta_1,d\delta_2 $\\
%$D$   &$e\delta_1,f\delta_2$   &$g\delta_1,h\delta_2 $
%\end{game}\hspace*{\fill}%
%\begin{game}{2}{2}
%      & $L$    & $R$\\
%$U$   &$a\delta^{2}_1,b\delta^{2}_2$   &$c\delta^{2}_1,d\delta^{2}_2 $\\
%$D$   &$e\delta^{2}_1,f\delta^{2}_2$   &$g\delta^{2}_1,h\delta^{2}_2 $
%\end{game}\hspace*{\fill}%
%\caption{ The game with valuable time at time instants $0,1 \mbox{ and } 2$. The payoffs of players get discounted according to their individual impatience.}
%\end{figure}
%}

In analyzing PNG with valuable time we consider several cases, depending on whether 
withdrawals and opting out are allowed. 

\subsubsection{No withdrawals and no opting out}\label{NoW-NoOpt}
%\paragraph{No withdrawals and no opting out}\label{obs:OR-assumptions}

%We impose some additional constraints, mainly for technical reasons:
For technical reasons we impose some additional constraints:

\begin{itemize}
\itemsep = -2pt
\item every game associated with a history of a PNG does not have  \emph{in its solution} outcomes assigning negative utility to players.
NB: we do allow payoff vectors consisting of negative reals to be present in the game matrix, only we do not allow such vectors to be associated to outcomes in the solution. 
This constraint has several practical consequences:

\begin{itemize} 
\itemsep = -2pt
\item players' expected payoffs {\em decrease} in time, 
i.e., the discounting factor $\delta$ has always a negative effect on the expected payoff.
\item players can make offers that redistribute the payoff vectors associated with outcomes in the solution, leaving some nonnegative amount to each  player and some strictly positive amount to some. 
\end{itemize}

\item each player's expected payoff at a disagreement history is assumed $0$.
%\vnote{So, an agreement can now be as bad as disagreement?} 
\end{itemize}

%\vcut{
%We will discuss the impact of such restrictions and the scenarios in which they do not hold in the technical companion \cite{Preplay-tech1}.
%}

%Henceforth, 
We will use the following notational conventions:

\begin{itemize}
\itemsep = -2pt
\item $(x,t)$ denotes the payoff vector $x$ at time $t$, where each component $x_i$ is discounted by $\delta_i^{t}$; $(x,t)_i$ is the payoff of player $i$ in the vector $x$ at $t$. 
\item  $\mathcal{G}_X$ will denote the set of all possible redistributions of payoffs of 
outcomes in a NFG $\mathcal{G}$ that assign nonnegative payoffs to all players. 
%\textcolor{purple}{
This set is compact, but generally not connected, as in the bargaining games of [4]. However, it is a finite union of compact and connected sets, and that will suffice to generalize the results from [4] that we need.
%}
%This set is a finite union of compact and connected sets. Notice the difference with the set of possible agreements in the bargaining games of \cite{OR}, which is instead a single set that is compact and connected.
\end{itemize}

\medskip
%\paragraph{Some observations}
\label{obs:OR-assumptions}
The following properties of every 2-person PNG with valuable time starting from a given 
NFG $\mathcal{G}$ are the four fundamental assumptions of 
%Rubinstein's perfect equilibrium solution of the bargaining problems in  
the  bargaining model in \cite{rub82} and \cite[p.122]{OR}.

%\vnote{What is '-i' below? We are in the two-player case. Let us use $\overline{i}$ for the other player.}
\begin{enumerate}
\itemsep = 1pt
\item For each $x,y \in \mathcal{G}_X$ such that $x \neq y$, if $(x,0)_i = (y,0)_i$ then $(x,0)_{-i} \neq (y,0)_{-i}$. This holds because the set $\mathcal{G}_X$ is made by payoff vectors and subtracting some payoff to a player means adding it to the other.

\item $(b^{i},1)_{-i} = (b^{i},0)_{-i} = (D)_{-i}$, where $b^{i}$ is the highest payoff that $i$ obtains in $\mathcal{G}_X$ and $(D)_{-i}$ the payoff for $-i$ in any disagreement history. As $b^{i}$ is the best agreement for player $i$ it is also the worst one for player $-i$. %However, due to the constraints we have imposed of the offers, the payoff for player $-i$ is 0, the same as the expected payoff at any disagreement history.

\item If $x$ is Pareto optimal amongst the payoff vectors in $\mathcal{G}_X$ then, 
%\vnote{I am lost here: every $x$ in $\mathcal{G}_X$ must be Pareto optimal.}
%\pnote{But you added the fact that they were optimal not me! It is perfectly fine!}
by definition of $\mathcal{G}_X$, there is no $y$ with $(x,0)_i \geq (y,0)_i$ for each $i\in N$. Moreover, $x$ is a redistribution of a maximal outcome in $\mathcal{G}$. 

\item There is a unique pair $(x^{*},y^{*})$ with $x^{*},y^{*} \in \mathcal{G}_X$ 
such that $(x^{*},1)_A = (y^{*},0)_A \mbox{ and } (y^{*},1)_B=(x^{*},0)_B$ and both $x^{*},y^{*}$ are Pareto optimal  amongst the payoff vectors in $\mathcal{G}_X$.
%\vnote{Ditto. Please sort this out.} 
\end{enumerate}

%The importance of the properties listed above lies in the fact that they are the four fundamental assumptions of Rubinstein's perfect equilibrium solution of the bargaining problems in  \cite{rub82}, see also  Osborne and Rubinstein's bargaining model \cite[p.122]{OR}.
 The first 3 statements above are quite straightforward. To see the last one,  let $x^{*} = (x_{A}^{*},x_{B}^{*})$ and  $y^{*} = (y_{A}^{*},y_{B}^{*})$ and let the sum of the payoffs in any maximal outcome in  $\mathcal{G}$ be $d$.  Then $(x_{A}^{*},x_{B}^{*},y_{A}^{*},y_{B}^{*})$ is the unique solution of the following, clearly consistent and determined system of equations: \\
$y_{A} = \delta_{A} x_{A}$, 
$x_{B} = \delta_{B} y_{B}$, 
$x_{A} + x_{B} = d$, 
$y_{A} + y_{B} = d$.   

The solution (see also  \cite{OR}) is: 
\[x_{A} = d\frac{1- \delta_{B}}{1- \delta_{A} \delta_{B}}; \ 
y_{A} =  \delta_{A}d \frac{1- \delta_{B}}{1- \delta_{A} \delta_{B}}\]

\[x_{B} =  \delta_{B} d\frac{1- \delta_{A}}{1- \delta_{A} \delta_{B}}; \ 
y_{B} = d\frac{1- \delta_{A}}{1- \delta_{A} \delta_{B}}.\]

\vcut{
We can now view the preplay negotiation as a bargaining process on how to play the starting normal form game and can adapt the results from \cite{OR}. Indeed, it can now be shown that the four fundamental assumptions of Rubinstein's perfect equilibrium solution of the bargaining problems in  \cite{rub82}, see also  Osborne and Rubinstein's bargaining model \cite[p.122]{OR} are satisfied for the resulting bargaining game and, in particular, there is a unique pair $(x^{*},y^{*})$ with $x^{*},y^{*} \in \mathcal{G}_X$ such that $u_A(x^{*},1) = u_A(y^{*},0) \mbox{ and } u_B(y^{*},1)=u_B(x^{*},0)$ and both $x^{*},y^{*}$ are Pareto optimal  amongst the payoff vectors in $\mathcal{G}_X$.
 Consequently, we obtain the following: 
}

\paragraph{Relation with bargaining games}
In the rest of the section we will explicitly view preplay negotiation as a bargaining process on how to play the starting normal form game. Using our observations and assumptions, we can adapt the results from \cite{OR} to show that when time is valuable not only all equilibria {consisting of stationary acceptance strategies} attain efficiency but they also do so by redistributing the payoff vector in relation to players' impatience. {Stationary acceptance strategies will be needed to focus only on the maximal connected subspace of the set $\mathcal{G}_X$. 
\vcut{
To say it with a slogan, while in \cite{OR} efficiency and fairness can be obtained in scenarios that resemble the division of a cake, in our setting we prove similar results for a set of cakes, of possibly different size. 
}
We extend the efficiency and fairness results obtained in \cite{OR}  for
bargaining games of the type of `division of a cake' to 
somewhat more general bargaining games of the type where players have to
choose a cake from a set of cakes, of possibly different sizes and divide it.
Our claim, in a nutshell, is that, when players employ stationary acceptance strategies, they immediately choose the largest cake and then bargain on how to divide it.

\cut{
Before doing this we say that making the first proposal in a PNG with valuable time is {\bf convenient} for player $A$ (resp. $B$) if the starting game $\mathcal{G}$ yields a vector of expected values $(c_A,c_B)$ with $c_A \leq \delta^{2}_A x^*_A$ (resp. $c_B \leq \delta^{2}_B y^*_A$). It is {\bf strictly convenient} for player $A$ (resp. $B$) if the previous inequality is strict. As we will see from the next result this definition precisely encodes when for a player is worth entering the negotiation game by making a first proposal, rather than passing directly instead.
}

First, recall that in our framework time passes as new offers are made. So, from a technical point if the PNG start with a game that is already perfectly solved, the player moving first will not be punished by passing immediately.

%\vnote{See questions in my email}
Then, without restriction of the generality of our analysis, we can assume a {\em unique} discounting factor for both players. Indeed, the discount factor of e.g., player $A$ can be made equal to that of $B$ while preserving the relative preferences of $A$ on the set of outcomes by suitably re-scaling the payoffs of $A$ in the input NFG, and therefore the expected value for $A$ of that game; for technical details see  \cite[p.119]{OR} following an idea of  Fishburn and Rubinstein quoted there.

Now we are ready to state the main result for this case: 

\begin{theorem}
%\begin{proposition}
\label{prop:unique}
Let $(x^{*},y^{*})$ be the unique pair of payoff vectors defined above. %And let $\mathcal{G}$ be NFG which yields a vector of expected values $(c_A,c_B)$ for the players.
{Then, in a PNG with valuable time starting from a NFG $\mathcal{G}$ with a unique discounting factor $\delta$ for both players, the strategy of player $A$ in every subgame perfect equilibrium consisting of stationary acceptance strategies %and meeting the above mentioned criteria \vnote{What criteria do you mean? These should be stated explicitly here.}
satisfies the following  (to obtain the strategy for $B$ simply swap $x^{*}$ and $y^{*}$):}

\begin{itemize}
\itemsep = -2pt
%\item if $A$ is the first player to move and making the first proposal is not strictly convenient for her and not convenient $B$, she passes;
\item if $A$ is the first player to move, then she 'proposes' outcome $x^{*}$, i.e., makes a conditional offer that, if accepted, would update the game into one with a dominant strategy equilibrium yielding the Pareto maximal outcome $x^{*}$ as payoff vector;
%\footnote{As $x^{*}$ is returned by some maximal outcome, one way of doing it is to make a conditional offer that, if accepted, would update the game into one where a unique outcome is dominant strategy equilibrium yielding $x^{*}$ as payoff vector;}
%\item when $A$ can make an offer and the previously made offer has not been accepted, she proposes $x^{*}$; 
%\item when $A$ can make an offer and the previously made offer has been accepted, she passes;
\item when $A$ has a pending offer $y^{\prime}$, she accepts it if and only if the payoff she gets in $y^{\prime}$ is at least as much as in $y^{*}$;
\item if $A$ can pass, she passes if and only if the expected value associated to the proposed game $y^{\prime}_A$ is at least $y^{*}_A$;  otherwise she proposes $x^{*}$.
%\item when $A$ has just accepted a proposal she passes;
\end{itemize}
%\end{proposition}
\end{theorem}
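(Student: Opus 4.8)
\begin{proofsketch}
The plan is to transcribe the uniqueness proof for Rubinstein's alternating-offers model from \cite{OR} into this richer extensive game, using Lemma~\ref{ODP} to replace every optimality check by a one-move-deviation check. The preceding discussion has done the set-up: after rescaling we may assume the common discount $\delta$; the Pareto frontier of $\mathcal{G}_X$ is exactly the set of divisions $(t,d-t)$, $0\le t\le d$, of a single ``cake'' of size $d$ (any point of a smaller cake being weakly dominated by one of the largest); and $(x^{*},y^{*})$ is the unique solution of $y_A=\delta x_A$, $x_B=\delta y_B$, $x_A+x_B=y_A+y_B=d$, whence $x^{*}_A=\tfrac{d}{1+\delta}$, $x^{*}_B=y^{*}_A=\tfrac{\delta d}{1+\delta}$, $y^{*}_B=\tfrac{d}{1+\delta}$, and $(x^{*},1)_A=(y^{*},0)_A$, $(y^{*},1)_B=(x^{*},0)_B$. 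Note that a player can in fact ``propose $x^{*}$'': an accepted conditional offer consists of a matching pair of unconditional offers, and by the observations on OI-transformations such a pair can make any maximal outcome, with any division of its payoffs, a strictly dominant-strategy equilibrium.

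First I would verify that the profile $\sigma^{*}$ of the statement (with $B$'s strategy obtained by interchanging $x^{*}$ and $y^{*}$) is an SPE, checking the three kinds of one-move deviations. Acceptance: a player who rejects a pending offer $y'$ will, under $\sigma^{*}$, propose $x^{*}$ next and have it accepted (it meets the opponent's threshold), and by the identity $(x^{*},1)_A=(y^{*},0)_A$ the present value of this continuation is exactly $y^{*}_A$; so accepting precisely when $y'_A\ge y^{*}_A$ is optimal. Proposal: any proposal $z$ with $z_A>x^{*}_A$ is a division of a total $\le d$, hence leaves the opponent $z_B<x^{*}_B$ and is rejected, after which the opponent proposes $y^{*}$, $A$ accepts, and $A$ ends up with the division $y^{*}$ one or more moves later, which is worse since $y^{*}_A=\delta x^{*}_A<x^{*}_A$; proposals with $z_A\le x^{*}_A$ are no better, and passing in place of the prescribed proposal only leads to an inferior game. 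Pass: the same bookkeeping shows that passing iff the current game already yields at least $y^{*}_A$, and proposing $x^{*}$ otherwise, is optimal. (All strict comparisons use the $d^{+}$ convention; at the boundary the player is indifferent, which is consistent with $\sigma^{*}$.)

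The core is uniqueness: I would show that every SPE $\sigma$ consisting of stationary acceptance strategies gives $A$ the displayed strategy. Three structural facts come first. (i) No play of $\sigma$ ends in disagreement: since disagreement is worth $0$ to both, the first mover could instead propose a division leaving both a positive amount, which the responder must accept because rejecting would leave him $0$. (ii) The agreed division is Pareto optimal in $\mathcal{G}_X$, hence a division of the $d$-cake: here stationarity of acceptance is indispensable, for if the agreed point were dominated, the current proposer could instead offer the dominating division of the $d$-cake, which the responder still accepts because his stationary threshold is met, strictly raising the proposer's share. (iii) Agreement is immediate, since delaying a Pareto-optimal division is strictly costly to the proposer, who can offer it at once. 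Granted these, I would run the standard squeeze: letting $M_i$, $m_i$ be the supremum and infimum of $i$'s payoff over all such SPE of the subgame in which $i$ is next to propose, and using that a rejecting player thereupon becomes the proposer, one obtains $m_A\ge d-\delta M_B$ and $M_A\le d-\delta m_B$ together with the two symmetric inequalities, which force $M_A=m_A=\tfrac{d}{1+\delta}=x^{*}_A$ and likewise $M_B=m_B=y^{*}_B=\tfrac{d}{1+\delta}$. The equilibrium strategy of $A$ is then read off: her value from rejecting a pending offer is $\delta M_A=\delta x^{*}_A=y^{*}_A$, so she accepts exactly the offers giving her at least $y^{*}_A$; when she proposes she must leave the responder his rejection value $\delta M_B=x^{*}_B$ and keep $d-x^{*}_B=x^{*}_A$, i.e.\ propose $x^{*}$; and when she may pass, passing is optimal precisely when the current (or proposed) game already secures her $y^{*}_A$, and otherwise she must propose $x^{*}$. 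Interchanging $x^{*}$ and $y^{*}$ gives $B$'s strategy.

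The main obstacle I anticipate is not the squeeze --- that is Rubinstein's argument almost verbatim --- but justifying that it applies here despite the extra moves of a PNG (passing; rejecting-and-reproposing within a single turn) and despite $\mathcal{G}_X$ being a union of cakes rather than one. The extra moves are absorbed by the no-delay fact (iii) together with the remark that ``pass'' behaves, for equilibrium purposes, like accepting the current game; the union of cakes is exactly what the stationarity hypothesis neutralizes, via (ii), confining all equilibrium play to the maximal connected component of $\mathcal{G}_X$. One should also verify that the boundary cases of the bargaining-model assumptions~1--4 hold for $\mathcal{G}_X$ in general, but these are routine once its Pareto frontier is recognized as the full simplex of divisions of $d$.
\end{proofsketch}
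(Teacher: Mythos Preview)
Your proposal is correct and follows essentially the same approach as the paper: both first verify via the One Deviation Property that the described profile is an SPE, and then establish uniqueness by the Rubinstein--Osborne squeeze on $M_i$ and $m_i$, using stationarity of acceptance to confine attention to the maximal connected component of $\mathcal{G}_X$. Your explicit structural facts (no disagreement, Pareto optimality of the agreed division, immediate agreement) correspond to what the paper handles more tersely inside its Step~1 and by reference to Proposition~\ref{prop:stationary}, but the overall architecture and the key inequalities are the same.
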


%\pnote{We are not describing a unique strategy but a set of properties shared by a set of strategies. The subgame perfect equilibrium is 'essentially unique' (as OR say), in the sense of next corollary, but technically it is not.}

\begin{proof}
It is easy to check, using the ODP, that no player can improve at any history of the game by deviating from this strategy. Consider for instance the case when player $A$ at time $t$ can choose whether to pass or not on the proposal of a distribution $z$ 
%\vnote{This is confusing: by passing on proposal do you mean 'accepting'?}
on which player $B$ has already passed. If $A$ passes then the payoff vector will be $(z,t)$;  if not, it will be $(x^{*},t+1)$ . Obviously $(z,t)_A \geq (x^{*},t+1)_A$ if and only if $(z,0)_A \geq (x^{*},1)_A = (y^*,0)_A $, so the a,cceptance rule is optimal. The reasoning for the other cases is similar.

To prove the claim we use 
%that every subgame perfect equilibrium has the properties given we put forward 
a variant of the argument  given in \cite{OR} for bargaining games, summarized as follows.  We first show [Step 1] that the best SPE payoff for player $A$ in any subgame $\mathcal{G^{\prime}}_A$  starting with her proposal and where $\mathcal{G}^{\prime}$ is the currently accepted game --- let us denote it by $M_A(G^{\prime}_A)$ --- yields the same utility as the worst one --- $m_A(\mathcal{G^{\prime}}_A)$ --- which, in turn, is the payoff of $A$ at $x^{*}$. The argument for $B$ is symmetric. 
Then we show [Step 2] that in every SPE the initial proposal is $x^{*}$, which is immediately accepted by the other player, followed by each player passing. Finally, we show [Step 3] that the acceptance and the passing conditions given are shared by every SPE strategy profile.

[Step 1] WLOG let $A$ be the player moving first and call $\mathcal{G^{\prime}}_A$ each subgame of the PNG beginning with a proposal by player $A$ and where $\mathcal{G^{\prime}}$ is the currently accepted NFG at its root ($\mathcal{G}_A$ is the game itself). Analogously let us call $\mathcal{G^{\prime}}_B$ each subgame of the PNG beginning with a proposal by player $B$. For each player $i$ let $M_i(\mathcal{G^{\prime}}_i)$ be the best SPE outcome that player $i$ can get from $\mathcal{G^{\prime}}_i$, i.e., $M_i(\mathcal{G^{\prime}}_i) = sup\{\delta^{t}x_i \mid$ there is a SPE of  $\mathcal{G^{\prime}}_i$ {consisting of stationary acceptance strategies} with value $ (x,t)_i \}$. Let $m_i(\mathcal{G^{\prime}}_i)$ be the corresponding infimum. Recall that $b^{i}$ is the highest payoff that $i$ obtains in $\mathcal{G}_X$. Hereafter we write $b^{i}_j$ instead of $(b^{i},0)_j$ for $i,j\in N$. By our assumptions the observations above, $b^{A}_B = b^{B}_A = 0$.   

We can now show that for each $\mathcal{G^{\prime}}$, $M_A(\mathcal{G^{\prime}_A})=m_A(\mathcal{G^{\prime}}_A)=x^{*}_A$ and $M_B(\mathcal{G^{\prime}}_B)=m_B(\mathcal{G^{\prime}}_B)=y^*_B$.
We first show that $m_B(\mathcal{G^{\prime}}_B) \geq b^{B}_B - \delta M_A(\mathcal{G^{\prime}}_A)$. 
%\vnote{Is $\delta (M_A(G_A))$ the same as  $\delta_A M_A(G_A)$? Check notation here and further.}
Therefore, if player $A$ rejects a proposal of player $B$ in the first period of $\mathcal{G^{\prime}}_B$ then she cannot get more than $\delta M_A(\mathcal{G^{\prime}}_A)$. This means that in any SPE of $\mathcal{G^{\prime}}_B$ she must accept any proposal giving her more than $\delta M_A(\mathcal{G^{\prime}}_A)$ (otherwise she could be at least as well off by rejecting it). Thus what is left for player $B$ is no less than $b^{B}_B - \delta M_A(\mathcal{G^{\prime}}_A)$ in any SPE of $\mathcal{G^{\prime}}_B$. 

It is easy to see that $M_A(\mathcal{G^{\prime}}_A) \leq b^{A}_A -  \delta m_B(\mathcal{G^{\prime}}_B)$, because player $A$ cannot get more than her best agreement minus what player $B$ could guarantee with a rejection. That is, player $A$ needs to pay $B$ with the difference between her ideal (appropriately discounted) payoff and what $B$ could guarantee alone. We can show now that $M_A(\mathcal{G^{\prime}}_A) = x^{*}_A$. That $M_A(\mathcal{G^{\prime}}_A) \geq x^{*}_A$ is easily observed from the properties satisfied by every SPE and the fact that each $\mathcal{G}^{\prime}$ is a transformation of $\mathcal{G}$ by conditional offers. To show that $M_A(\mathcal{G^{\prime}}_A) \leq x^{*}_A$ we argue the following.
We know that $\delta b^{A}_B = 0$. We also know that $\delta (b^{B}_B - \delta b^{A}_A) >0 = b^{A}_B= b^{B}_B - b^{A}_A$. In turn we have that $b^{A}_A > b^{A}_A - (\delta ( b^B_B - \delta b^{A}_A))$. By the previous observations we can conclude that $M_A(\mathcal{G^{\prime}}_A) \leq b^{A}_A - (\delta ( b^B_B - \delta M_A(\mathcal{G^{\prime}}_A)))$. But, {by a similar argument to that in the proof of Proposition \ref{prop:stationary}, $M_A$ is obtained from a strongly efficient SPE}. So, as the set of maximal outcomes in $\mathcal{G}_X$ is compact and connected, it also follows that there exists $U_A \in [M_A(\mathcal{G^{\prime}}_A), b^A_A)$ such that $U_A =  b^{A}_A - (\delta ( b^B_B - \delta U_A))$. Now if $M_A (\mathcal{G^{\prime}}_A) > x^{*}_A$ then $U_A \neq x^{*}_A$. Then, taking any pair of efficient agreements $(a^{*},b^{*})$ such that $a^{*}_A= U_A$ and $b^{*}_A = \delta U_A$ 
%\vnote{What is $\delta$ here: a pair of factors or a single factor?}
we have obtained a pair of efficient agreements contradicting Property \ref{obs:OR-assumptions} (4). Similar reasoning shows that $m_A(\mathcal{G^{\prime}}_A)=x^{*}_A, M_B(\mathcal{G^{\prime}}_B) = y^{*}_B$ and finally $m_B(\mathcal{G^{\prime}}_B)=y^*_B$. 

\medskip
[Step 2] Step 1
implies that if $A$ is the first player to move, she starts by proposing $x^{*}$ which is immediately accepted. Likewise for player $B$.

\medskip
[Step 3] Step 1 and 2 imply that every SPE shares the same acceptance and passing condition. Consider first the acceptance condition. If $B$ rejects an offer in $\mathcal{G^{\prime}}_A$ we go to $\mathcal{G^{\prime}}_B$ where, by what was observed before, he gets $y^{*}_B$. But $y^{*}_B = \delta x^{*}_B$ so every proposal giving him in $\mathcal{G^{\prime}}_A$ at least $x^{*}_B$ should be accepted, otherwise rejected.
Putting everything together we have that player $B$ must accept any proposal giving him exactly $x^{*}_B$. Similar reasoning applies for the passing condition and for player $A$.\end{proof}

One important consequence of Theorem \ref{prop:unique} is that every SPE
%subgame perfect equilibrium 
strategy profile, {consisting of stationary acceptance strategies}, of a two-player PNG with valuable time and with $N=\{A,B\}$ starting from $\mathcal{G}$ and with $A$ (resp. $B$) first player to move induces a play $h$ of length $4$ and of value for player $A$ of $x^{*}_A$ while for player $B$ of $\delta y^{*}_B$ (resp. $(y^{*}_A,\delta x^{*}_B)$ if $B$ moves first). %\vnote{Now, why are the discounting factors different again here?}

%\pnote{We could take several other results from OR, actually. For instance all the variations of discounting, possibility for the game to breakdown. I will have a look at the chapter again and think of it for when we will discuss next time.}
%\vnote{that would be nice, but we are currently way over the space limit.} 

\medskip
%\paragraph{Stocktaking}

To summarize, when time is valuable and players' value of time (impatience) is measured by a vector of discount factors $\delta$ and no withdrawals and opting out are allowed, the SPEs {following stationary acceptance strategies} are essentially unique, efficient and redistribute a maximal payoff vector in a {\em fair} way, depending on players' impatience, viz. in each SPE play, players agree as soon as possible and divide (almost) evenly any of the maximal outcomes in the game. Thus, introducing value of time solves both problems of efficiency and fairness at once. %Moreover, similar to what is observed in \cite{OR}, stationary strategies naturally emerge in all equilibria.
 %: even though we did not impose on players to display consistency in accepting and passing they do in equilibrium.

\pcut{

\section{Extended frameworks with offer-induced game transformations}
\label{sec:ext}

The framework with offer-induced game transformations of non-cooperative games that can be extended in various ways. Here we discuss briefly two of the most important cases.

\subsection{Coalitional preplay negotiations in multi-player normal form games}
\label{sec:ext-N}

\begin{figure}[htb]\hspace*{\fill}
\begin{game}{2}{2}[$Y$]
& $Y$ & $N$\\ 
$Y$ &$3,3,3$  &$-1,8,-1$\\
 $N$ &$8,-1,-1$ &$4,4,-5$
\end{game}\hspace*{10mm}% 
\begin{game}{2}{2}[$N$]
& $Y$ & $N$\\ 
$Y$ &$-1,-1,8$  &$-5,4,4$\\
 $N$ &$4,-5,4$ &$0,0,0$
\end{game}\hspace*{\fill}

\caption[]{The three-person common project: a player may either contribute \EUR{9}
 to a common project, or contribute nothing. Each \EUR{9} contributed produces an additional \EUR{3}. The total amount is divided equally among the players, independently of their contribution. Note that the money contributed by a player is subtracted from his final payoff.} 
 \label{fig:3p} \end{figure}

The analysis of $N$-player normal form games with preplay negotiations phase, for $N > 2$, is much more complicated than the 2-players case. To begin with, the benefit for a player $A$ of player $B$ playing a strategy induced by an offer from $A$ to $B$ crucially depend on the strategies that the remaining players choose to play, so an offer from a player to another player does not have the clear effect that it has in the 2-player case. Thus, 
a player may have to make a  \emph{collective offer} to several (possibly {all}) other players in order to orchestrate their plays in the best possible for him way. Furthermore, a player may be able to benefit in different ways by making offers for side payments to different players or groups of players, and the accumulated benefit from these different offers may or may not be worth the total price paid for it. Lastly, when all players make their offers pursuing their individual interests only, the total effect may be completely unpredictable, or even detrimental for all players. It is therefore natural that groups of players get to collaborate in coordinating their offers. 

Thus, a coalitional behaviour naturally emerges here, and the preplay negotiation phase incorporates playing a coalitional game to determine the partition of all players into coalitions that will coordinate their offers and moves in the negotiation phase. However, we emphasize again that the transformed normal form game played after the preplay negotiation phase should remain a non-cooperative game where every player eventually plays for himself.  

Here we only begin to discuss this more general framework, by first classifying the different types of offers that players or coalitions can make to others. For each of them we give an example in terms of the Common Project game in Figure \ref{fig:3p}, where we call the respective players Row, Column, and Table: 

\begin{enumerate}
\item {\bf One-to-one offers:} of the type $A \xlongrightarrow{\delta \slash \sigma_B} \mathcal{B}$ discussed in the previous sections. A player may place several such offers to different players, and each offer is independent from the rest and only conditional on the strategy played by its sole recipient. Figure \ref{fig:3p-1} illustrates them for the case of the three-person common project.

\begin{figure}[htb]\hspace*{\fill}

\hspace*{16mm}
\begin{game}{2}{2}[$Y$]
& $Y$ & $N$\\ 
$Y$ &$-2^{-},8^{+},3$  &$-1,8,-1$\\
 $N$ &$3^{-},4^{+},-1$ &$4,4,-5$
\end{game}\hspace*{10mm}% 
\begin{game}{2}{2}[$N$]
& $Y$ & $N$\\ 
$Y$ &$-6^{-},4^{+},8$  &$-5,4,4$\\
 $N$ &$-1^{-},0^{+},4$ &$0,0,0$
\end{game}%\hspace*{\fill}

\bigskip

\begin{game}{2}{2}[$Y$]
& $Y$ & $N$\\ 
$Y$ &$-7-2\epsilon,8^{+},8^{+}$  &$-6^{-},8,4^{+}$\\
 $N$ &$-2-2\epsilon,4^{+},4^{+}$ &$-1^{-},4,0^{+}$
\end{game}\hspace*{10mm}% 
\begin{game}{2}{2}[$N$]
& $Y$ & $N$\\ 
$Y$ &$-6^{-},4^{+},8$  &$-5,4,4$\\
 $N$ &$-1^{-},0^{+},4$ &$0,0,0$
\end{game}%\hspace*{\fill}

\caption[]{One-to-one offers. Above: Row offers $5^{+}$ to Column for him contributing to the project. This is enough to make him contribute, but it does not make Row better off in the unique dominant strategy equilibrium $(N,Y,N)$. 
Below: Row offers $5^{+}$ independently to each Column and Table for contributing. Now $(N,Y,Y)$ is the dominant strategy equilibrium, but again Row does not benefit from the cooperation of the other two.} \label{fig:3p-1} \end{figure}

\item {\bf Many-to-one (collective) offers:} a group (coalition) of players 
$\act$ makes a collective offer to a single player $B$ for a total payment of bonus, conditional on $B$ playing the strategy specified in the offer. 

The additional issue arising in collective offers is how the coalition
$\act$ should split amongst themselves the cost of the bonus due to player $B$ if he complies. The distribution of the due contributions generates a standard in cooperative game theory problem, which will analyze in a follow-up work. Here we assume that a reasonable and commonly acceptable solution to that problem is adopted, e.g., using Shapley value based on the expected values of the normal form game for each player and coalition, and that solution computes on the side the distribution of the due contributions.  Once determined and agreed upon, that distribution is explicitly specified as a fixed part of the offer, and accordingly determines the transformation of the payoff matrix of the game. An instance of this is given in Figure \ref{fig:3p-2}. 

Formally, we will denote such collective offers by $\act  \xlongrightarrow{\delta_{\act} \slash \sigma_B} B$ where 
$\delta_{\act}:  \act  \to \mathbb{R}^+$ is the function which specifies the due contribution $\delta_{\act}(A_{i})$ for each player $A_{i}\in \act$ to the total bonus payable to $B$, while  
$\sigma$ is the strategy of $B$ on which the offer is conditional. 

\begin{figure}[htb]\hspace*{\fill}
\begin{game}{2}{2}[$Y$]
& $Y$ & $N$\\ 
$Y$ &$0.5^{-},8^{+},0.5$  &$-1,8,-1$\\
 $N$ &$5.5^{-},4^{+},-3.5$ &$4,4,-5$
\end{game}\hspace*{5mm}% 
\begin{game}{2}{2}[$N$]
& $Y$ & $N$\\ 
$Y$ &$-3.5^{-},4^{+},5.5$  &$-5,4,4$\\
 $N$ &$1.5^{-},0^{+},1.5$ &$0,0,0$
\end{game}\hspace*{\fill}

\caption[]{A many-to-one offer.  Row and Table offer {\em together} $5^{+}$ to Column for him contributing to the project. The amount is divided about evenly between the two. Notice that the offer is enough to make Column contribute and makes {\em all} players better off in the unique resulting dominant strategy equilibrium $(N,Y,N)$.} 
\label{fig:3p-2} \end{figure}

\item {\bf One-to-many (conjunctive) offers:} a player $A$ offers to a group of other players $\mathcal{B} = \{B_{1}, \ldots B_{k}\}$ side payments of bonuses to each of them conditional on \emph{each of them} playing a strategy prescribed in $A$'s offer\footnote{Alternatively, the player $A$ could offer just one total bonus to the entire group $\mathcal{B}$ and leave it to them to distribute amongst themselves, but this is a risky option because $A$ would not have control on that distribution that would ensure that each player in $\mathcal{B}$ would receive a sufficient incentive to play the prescribed by $A$ strategy.}. Such offer presumes that the players from $\mathcal{B}$ coordinate their actions and play as a coalition, because if even one of them deviates from the prescribed to him strategy, the entire offer becomes null and void and noone from the group of recipients gets paid. Formally, we will denote such offers by $A \xlongrightarrow{\delta_{\mathcal{B}} \slash  \sigma_{\mathcal{B}}} \mathcal{B}$, where  
$\delta_{\mathcal{B}}:  \mathcal{B}  \to \mathbb{R}^+$ is the function which specifies the promised bonus $\delta_{\mathcal{B}}(B_{i})$ for each player $B_{i}\in \mathcal{B}$ and  $\sigma_{\mathcal{B}}$ is the strategy profile for $\mathcal{B}$ on which the offer is conditional.  An illustration of such offer is given in Figure \ref{fig:3p-3}.

\begin{figure}[htb]%\hspace*{\fill}
\begin{game}{2}{2}[$Y$]
& $Y$ & $N$\\ 
$Y$ &$-7-2\epsilon,8^{+},8^{+}$  &$-1,8,-1$\\
 $N$ &$-2-2\epsilon,4^{+},4^{+}$ &$4,4,-5$
\end{game}\hspace*{10mm}% 
\begin{game}{2}{2}[$N$]
& $Y$ & $N$\\ 
$Y$ &$-1,-1,8$  &$-5,4,4$\\
 $N$ &$4,-5,4$ &$0,0,0$
\end{game}\hspace*{\fill}

\caption[]{A one-to-many offer. Row offers $10+2\epsilon$ to Column and Table for them \emph{collectively} contributing to the project, dividing the amount equally among the two. Notice that this does not make their action of contributing part of a dominant strategy equilibrium, even though $(N,Y,Y)$ is a Nash Equilibrium and $N$ is a dominant strategy for Row. In either case where both Column and Table contribute, Row is utterly worse off.} 
\label{fig:3p-3} 
\end{figure}

\item {\bf Many-to-many (collective and conjunctive) offers:} 
where a coalition $\act$ makes a collective offer to a group $\mathcal{B}$. This combines the previous 2 types of offers in an obvious way. Many further issues arise here, one of them being whether $\act$ and $\mathcal{B}$ may intersect, in which case there could be an obvious conflict of interests for the players in the intersection. Figure \ref{fig:3p-4} illustrates this complex form of offer.

\begin{figure}[htb]\hspace*{\fill}
\begin{game}{2}{2}[$Y$]
& $Y$ & $N$\\ 
$Y$ &$0,1^{-},8^{+}$  &$-4,6^{-},4^{+}$\\
 $N$ &$8,-1,-1$ &$4,4,-5$
\end{game}\hspace*{10mm}% 
\begin{game}{2}{2}[$N$]
& $Y$ & $N$\\ 
$Y$ &$-1,-1,8$  &$-5,4,4$\\
 $N$ &$4,-5,4$ &$0,0,0$
\end{game}\hspace*{\fill}

\caption[]{A many-to-many offer: Column and Row offer together $6^{+}$ collectively to Row and Table to make them contribute. The payments are divided as follows. Column offers to pay $2^{+}$ while Row offers to pay $4$; if cooperating Table will receive $5^{+}$, while Row will receive $1$. Notice that in this case, too, the resulting game has no dominant strategy equilibrium.}  
\label{fig:3p-4} \end{figure}

\end{enumerate}

Furthermore, each of these types of offers can be made conditional on counter-offers. Thus, generally, every player would be involved on both sides  of several, possibly conflicting offers, and would have to decide  which ones to accept, commit or withdraw as a proposer, and which ones to reject or ignore as a recipient.  

\medskip
Thus, the preplay negotiations phase here is much more complex and less determined than in the 2-player case. It would involve, for instance, solving (possibly repeatedly) a corresponding coalitional game to determine a stable partition into coalitions and then conducting negotiations between these coalitions. We leave the analysis of the $N$-player preplay negotiation games to a future work. 

\subsection{Inter-play offers in turn-based extensive form games}
\label{sec:ext-ext}

The problem of underperformance is not limited to normal form games, where players cannot observe the outcome of the opponents' actions during the play. It also arises in some extensive form games, such as the Centipede game, where the Backward Induction strategy profile can recommend an utterly inefficient solution. The idea of preplay offers of bonuses to other players can be applied quite effectively in extensive form games by means of {\bf inter-play offers}, where, before every move of a player, the other player(s) can make him individual or coalitional offers conditional on his forthcoming move. The players from both sides can consider these offers through some commonly accepted solution concept, e.g. Backward Induction (BI) which would provide current values for each player of every  subgame arising after the possible moves of $A$. 

A good illustration of the potential power of such inter-play offers is the Centipede game which can be easily transformed in a way stimulating a degree of cooperation. Consider the version of the Centipede game on Figure \ref{fig:centipede1}, where player I plays at the odd-numbered nodes and  player II plays at the even-numbered nodes. 

As well known, BI prescribes player I to go down at node 1, yielding the value $(2,1)$ of the game. If however, before I's move player II has the opportunity to make an offer to I, then II can offer him a bonus payment\footnote{Recall our notation: $d^{+} = d + \epsilon; d^{-} = d - \epsilon$.} of $1^{+} = 1 + \epsilon$ for any $\epsilon >0$ on the condition that I goes right at node 1.  
This offer transforms the game tree to the one given on Figure \ref{fig:centipede2}. In the resulting game it is strictly more beneficial for I to go right at node 1. Note that the BI value of the resulting game is ($2^{+}$,$2^{-}$), which is a strict improvement for \emph{both} players.
At node 2 the situation is symmetric. Now, player I can make an offer of $1^{+}$ to player II conditional on her going right. That offer transforms the game tree again, to the one given on top of Figure \ref{fig:centipede2}. Note that in this game the original payoffs are restored in the subgame rooted at node 3. Thereafter, the argument recurs producing further transformations, eventually leading to the game shown at the bottom of 
 Figure \ref{fig:centipede3}, where player I would again be better off going right and ending the game with the mutually most beneficial payoffs  of ($6^{+}$,$6^{-}$). 
 
 The message is clear: inter-play bonus payments can naturally stimulate cooperation in non-cooperative extensive form games. 

\begin{center}

\begin{figure}[htb]%\hspace*{\fill}%
\begin {tikzpicture}[eleA/.style={circle,fill=none,draw=black,text=black,minimum size=4mm,node distance=18mm},
eleB/.style={circle,fill=none,draw=none,text=black,minimum size=4mm,node distance=8mm}]

\node [eleA] (a1) [] {1};
\node [eleA] (a2) [right = of a1] {2};
\node [eleA] (a3) [right = of a2] {3};
\node [eleA] (a4) [right = of a3] {4};
\node [eleA] (a5) [right = of a4] {5};
\node [eleB] (a6) [right = of a5] {(5,7)};

\draw [-] (a1) to node [] {} (a2);
\draw [-] (a2) to node [] {} (a3);
\draw [-] (a3) to node [] {} (a4);
\draw [-] (a4) to node [] {} (a5);
\draw [-] (a5) to node [] {} (a6);

\node [eleB] (a11) [below = of a1] {(2,1)};  
\node [eleB] (a12) [below = of a2] {(1,3)};  
\node [eleB] (a13) [below = of a3] {(4,2)};  
\node [eleB] (a14) [below = of a4] {(3,5)};  
\node [eleB] (a15) [below = of a5] {(6,4)}; 

\draw [-] (a1) to node [] {} (a11);
\draw [-] (a2) to node [] {} (a12);
\draw [-] (a3) to node [] {} (a13);
\draw [-] (a4) to node [] {} (a14);
\draw [-] (a5) to node [] {} (a15);
\end {tikzpicture}

\caption{A starting Centipede game.}
\label{fig:centipede1}
\end{figure}
\end{center}

\begin{center}

\begin{figure}[htb]%\hspace*{\fill}%
\begin {tikzpicture}[eleA/.style={circle,fill=none,draw=black,text=black,minimum size=4mm,node distance=18mm},
eleB/.style={circle,fill=none,draw=none,text=black,minimum size=4mm,node distance=8mm}]

\node [eleA] (a1) [] {1};
\node [eleA] (a2) [right = of a1] {2};
\node [eleA] (a3) [right = of a2] {3};
\node [eleA] (a4) [right = of a3] {4};
\node [eleA] (a5) [right = of a4] {5};
\node [eleB] (a6) [right = of a5] {($6^{+}$,$6^{-}$)}; 

\draw [-] (a1) to node [] {} (a2);
\draw [-] (a2) to node [] {} (a3);
\draw [-] (a3) to node [] {} (a4);
\draw [-] (a4) to node [] {} (a5);
\draw [-] (a5) to node [] {} (a6);

\node [eleB] (a11) [below = of a1] {($2,1$)}; 
\node [eleB] (a12) [below = of a2] {($2^{+}$,$2^{-}$)}; 
\node [eleB] (a13) [below = of a3] {($5^{+}$,$1^{-}$)}; 
\node [eleB] (a14) [below = of a4] {($4^{+}$,$4^{-}$)}; 
\node [eleB] (a15) [below = of a5] {($7^{+}$,$3^{-}$)}; 

\draw [-] (a1) to node [] {} (a11);
\draw [-] (a2) to node [] {} (a12);
\draw [-] (a3) to node [] {} (a13);
\draw [-] (a4) to node [] {} (a14);
\draw [-] (a5) to node [] {} (a15);
\end {tikzpicture}

\caption{The Centipede game after the offer of II at node 1.}
\label{fig:centipede2}
\end{figure}
\end{center}

\begin{center}
\begin{figure}[htb]%\hspace*{\fill}%

\begin {tikzpicture}[eleA/.style={circle,fill=none,draw=black,text=black,minimum size=4mm,node distance=18mm},
eleB/.style={circle,fill=none,draw=none,text=black,minimum size=4mm,node distance=8mm}]

\node [eleA] (a1) [] {1};
\node [eleA] (a2) [right = of a1] {2};
\node [eleA] (a3) [right = of a2] {3};
\node [eleA] (a4) [right = of a3] {4};
\node [eleA] (a5) [right = of a4] {5};
\node [eleB] (a6) [right = of a5] {(5,7)};

\draw [-] (a1) to node [] {} (a2);
\draw [-] (a2) to node [] {} (a3);
\draw [-] (a3) to node [] {} (a4);
\draw [-] (a4) to node [] {} (a5);
\draw [-] (a5) to node [] {} (a6);

\node [eleB] (a11) [below = of a1] {($2,1$)}; 
\node [eleB] (a12) [below = of a2] {($2^{+}$,$2^{-}$)}; 
\node [eleB] (a13) [below = of a3] {(4,4)}; 
\node [eleB] (a14) [below = of a4] {(3,5)}; 
\node [eleB] (a15) [below = of a5] {(6,4)}; 

\draw [-] (a1) to node [] {} (a11);
\draw [-] (a2) to node [] {} (a12);
\draw [-] (a3) to node [] {} (a13);
\draw [-] (a4) to node [] {} (a14);
\draw [-] (a5) to node [] {} (a15);
\end {tikzpicture}
\centerline{\ldots \ldots \ldots \ldots \ldots \ldots} 
\vspace{2mm}

\begin {tikzpicture}[eleA/.style={circle,fill=none,draw=black,text=black,minimum size=4mm,node distance=18mm},
eleB/.style={circle,fill=none,draw=none,text=black,minimum size=4mm,node distance=8mm}]

\node [eleA] (a1) [] {1};
\node [eleA] (a2) [right = of a1] {2};
\node [eleA] (a3) [right = of a2] {3};
\node [eleA] (a4) [right = of a3] {4};
\node [eleA] (a5) [right = of a4] {5};
\node [eleB] (a6) [right = of a5]  {($6^{+}$,$6^{-}$)};

\draw [-] (a1) to node [] {} (a2);
\draw [-] (a2) to node [] {} (a3);
\draw [-] (a3) to node [] {} (a4);
\draw [-] (a4) to node [] {} (a5);
\draw [-] (a5) to node [] {} (a6);

\node [eleB] (a11) [below = of a1] {($2,1$)}; 
\node [eleB] (a12) [below = of a2] {($2^{+}$,$2^{-}$)}; 
\node [eleB] (a13) [below = of a3] {(4,4)}; 
\node [eleB] (a14) [below = of a4] {($4^{+}$,$4^{-}$)}; 
\node [eleB] (a15) [below = of a5] {(6,4)}; 

\draw [-] (a1) to node [] {} (a11);
\draw [-] (a2) to node [] {} (a12);
\draw [-] (a3) to node [] {} (a13);
\draw [-] (a4) to node [] {} (a14);
\draw [-] (a5) to node [] {} (a15);
\end {tikzpicture}

\caption{The Centipede games after the subsequent offers at nodes 2 \ldots  5.}
\label{fig:centipede3}
\end{figure}
\end{center}

}
\section{Related work and comparisons}
\label{sec:related}

The present study has a rich pre-history and we do not purport to provide a comprehensive citation of all related previous work and literature here, but will only mention various links with earlier studies and then will discuss in more detail and compare with the most relevant recent work. 

\subsection{Related topics and relevant early references} 

Here is a selection of related topics and relevant earlier references:

%\begin{itemize}
%\itemsep = -1pt
%\item 
\medskip
\indent $\triangleright$ To begin with, preplay offers technically fall broadly in the scope of \emph{externalities}. There is abundant literature on these, of which we only mention some of the early works:  \cite{meade}, \cite{maskin}, \cite{var1}, 
More specifically, preplay offers can be regarded as a special type of so called in cooperative game theory \emph{side payments}. 

%\item 
\medskip
\indent $\triangleright$ Coase theorem, \cite{coase} describes how efficiency of an allocation of goods or simply an outcome can be obtained in presence of externalities, i.e. when actors' possible decisions affect positively or negatively the payoffs of the other actors involved. The claim, which is usually provided in a rather informal fashion, states that if there are no transaction costs and it is possible to bargain on the effect of the externalites, the process will lead to an efficient outcome regardless of the initial allocation of property rights, i.e. regardless of who is endowed with the capacity of performing the action in question. 

%\item 
\medskip
\indent $\triangleright$ 
\cite{ros1} proposes one of the earliest models of preplay negotiations, where `players successively commit themselves irrevocably, according to a specified exogenous ordering, to coalitional strategies conditionally on the rest of the players in the coalition agreeing to play their parts of the coalitional strategy'. He defines a special solution concept, the induced outcome, and provides some sufficient conditions for its existence and uniqueness.

%\item 
\medskip
\indent $\triangleright$ Several two-stage games with preplay communication have been studied in the literature. They seem to go back to \cite{gut1}  and \cite{gut2}.  \cite{kal1} studies preplay negotiation procedures 
as sequences of pre-defined length of ``preplays'', each being a joint strategy of all players. \cite{MatthewsP89} consider preplay communication in the context of two-person sealed-bid double auctions. 
\cite{DS91} consider a 2-stage game for implementing Lindahl's voluntary-exchange mechanism. In a series of papers, incl. \cite{farrell}, Farrell considers two-stage games, with preplay `cheap talk' followed by actual play, and discusses the role of preplay communication in ensuring Nash equilibrium profile in the actual play.  Also, \cite{Watson91} studies two-stage  2-person normal form games with preplay communication and \cite{AGV} study Stackelberg-solvable games with preplay communication.

%\item 
\medskip
\indent $\triangleright$ Our preplay negotiation games are closely related to \emph{bargaining} games, \cite{rub82,OR90,OR}, \cite{Myerson}. 

%\item 
\medskip
\indent $\triangleright$
Another related early work is \cite{var1} where he studies variations  of `compensatory mechanisms'  where, instead of making offers, players declare compensations for which they are prepared to play one or another strategy (in favour of another player who is willing to pay such compensation and makes a binding offer for it). Although the flavour of  such variation is somewhat different, technically it reduces to a type of games with preplay offers that we have considered here.   

%\item 
\medskip
\indent $\triangleright$ \cite{FJK}, and more recently \cite{MT06}, consider the use of `agents' or 'mediators' playing on behalf of the players, and show how such mechanisms can be used to achieve more efficient outcomes in non-cooperative games. 

%\item 
\medskip
\indent $\triangleright$ The idea of combining competition and cooperation in non-cooperative games has been considered often since the early times of game theory, and has later evolved in theories of \emph{co-opetition} by \cite{BN} and more recently \cite{cor1}. Related in spirit are some theories of coalitional rationality, see \cite{amb1}.

%\end{itemize}

\subsection{Detailed comparison with most relevant recent work} 

To our knowledge, Jackson and Wilkie have been the first to explicitly study arbitrary transfer functions from one to another player in a normal form game. That work was preceded by earlier relevant literature mentioned above, such as \cite{gut1, DS91, var1, qin}, where only limited forms of payments were considered, such as payments proportional to the actions taken by the other players or only contingent on own actions. Jackson and Wilkie's framework bears substantial similarities with ours, as it studies a two-stage transformations on a normal form game where players announce transfers functions which update the initial normal form game and then play the updated game. Jackson and Wilkie study the subgame perfect equilibria of the two stage game and show under what conditions equilibria of the original game survive in the update game. They focus on the 2-player case, but they also extend their results to the N-player case. However, there are some essential conceptual and technical differences between this framework and our, which we describe and discuss below. 
 In \cite{JW05}:
 
%\begin{itemize}
%\itemsep = -2pt

%\item 
\medskip
\indent $\triangleright$  Transfers from a player $A$ to a player $B$ are of the form (in our notation) $A \xlongrightarrow{\delta \slash  \sigma} B$ where $\sigma \in \prod_{i \in N} \Sigma_i$, $\delta \in \mathbb{R}^+$ and $\delta=0$ whenever $A=B$, i.e. players are allowed to make positive side payments to other players {\em that are conditional on the entire strategy profile played}, and not only on the recipient's  individual strategy, as in our framework. 
Technically, every unconditional offer from player $A$ to player $B$ can be simulated by a set of such transfers from $A$ to $B$. This is not the case for conditional offers, which would instead require a {\em set} of transfers from $B$ to $A$ as well, or the possibility for $\delta$ to be negative, i.e. the introduction of punishments. So, these two types of offers are generally incompatible. 
%
%\pnote{Rephrased:} 
%\vnote{Re-rephrased:} 
However, we see the main importance of this difference as conceptual, rather  than technical. We argue that preplay offers based \emph{only} on the opponents' choice of actions are more natural and realistic than those dependent also on own or other players' actions, because of creating more explicit and unambiguous incentives for the opponents. 
%and are also closer to the idea of an offer {\em signaling} the intention of future play. 
Indeed, if a player $A$ makes an offer contingent upon a certain strategy profile $\sigma$ and hence, inter alia, on her playing a certain action $\sigma_{A}$, then $A$ creates positive incentives for the other players to play $\sigma$, but a possibly \emph{negative incentive for herself} to play $\sigma_{A}$. After all, if all other players take the bait and play $\sigma$ then $A$'s objective is already achieved, so rationally she should play her best response to  $\sigma_{-A}$ in the transformed game. If that action is different from $\sigma_{A}$ then $A$ would moreover save the promised payments to the others because the strategy profile $\sigma$ was not actually played! %In any case, we think that making an offer contingent on own future actions can send confusing signals to the opponents and would certainly make the rational reasoning rather more complicated that it is in our framework. 

%\item 
\medskip
\indent $\triangleright$  Players announce their transfer functions {\em simultaneously}. 
 This is a reasonable choice in situations where, e.g., players only have the possibility for once-off communication exchange before the actual play, but it is not so in many others where they would rather negotiate on their choice of actions, as it trivializes the whole preplay negotiation phase which is central in our framework. 
In that sense, the framework of \cite{JW05} and our have essentially different scopes of applicability. 
\vcut{
Canceling an offer by another player is technically possible in such framework, by simply making payments at every outcome which cancel out, e.g. $A \xlongrightarrow{\delta \slash  \sigma} B$ and $B \xlongrightarrow{\delta \slash  \sigma} A$, but cannot be a deliberate response to an anticipated action. 
}

%\item 
\medskip
\indent $\triangleright$  %In \cite{JW05} 
%\vnote{Paolo, please shorten the text in the next 2 pages.} 
The authors study strategies that can be {\em supported}, i.e. that they are subgame perfect equilibria of the two-stage game and Nash-equilibria \emph{of the original game} that also {\em survive} --- i.e. remain equilibria --- in the updated game. In particular, they focus on the (interesting) relation between the solo-payoff, i.e. the Nash equilibrium payoff that a player can guarantee by making offers, and the supportability of strategies. 
%For the sake of precision, the solo-payoff $u^{s}_i$, starting from a normal form game $\mathcal{G}$ where mix strategies are allowed, is the defined as $sup_{t_i} (min_{\mu \in NE(t^{0}_{-i},t_i)} EU_i (\mu, t^{0}_{-i},t_i ))$, where: 
%\begin{itemize} 
%\itemsep = -2pt
%\item $ \mu \in NE(t^{0}_{-i},t_i)$ denotes the set of Nash equilibria of the second stage game (the one updating $\mathcal{G}$ with players' simultaneous offer) when player $i$ offers $t$ and the other players offer nothing; 
%\item  $EU_i (\mu, t^{0}_{-i},t_i )$ the expected utility for player $i$ associated to the profile $\mu, t^{0}_{-i},t_i$ (as specified above); \item $sup$ and $min$ simply return the expected values;  
%\end{itemize} 
Jackson and Wilkie show two important results for the two-player case, the main bulk of their paper: (i) that every Nash equilibrium $x$ of the starting game survives if and only if it yields for every player $i$ a utility that is higher than the one given by $i$'s solo-payoff; and (ii) that a transfer function together with an outcome are supportable if and only if they yield  for every player $i$ a utility that is higher than the one given by $i$'s {\em minimal} solo-payoff, the solo-payoff obtained by making {\em minimal} offers. It is worth noticing that the definition of minimal offer they adopt is essentially the one we have adopted here: the minimal transfer function needed to change the game solution.
 \vcut{
\item The role of time is not considered and players cannot build upon the game obtained in the second stage, by for instance subsequently making further offers.
}

%\end{itemize}
%
%\begin{figure}[htb]\hspace*{\fill}%
%\begin{game}{2}{2}
%      & $C$    & $D$\\
%$C$   &$4 -t_{CC},4 +t_{CC}$   &$0,5$\\
%$D$   &$5- t_{DC}, 0 + t_{DC}$   &$1,1$
%\end{game}\hspace*{\fill}%
%\caption{Prisoner's dilemma and payments on outcomes.}
%\label{PDup}
%\end{figure}
%
%Jackson and Wilkie's main result consists of showing that side-payments cannot guarantee efficiency, i.e. there are games in which strategies leading to the choice of Pareto optimal outcomes in the second-stage game are not supported. Roughly the argument, also reported in \cite{EP11}, is that if the original game is the one in Figure \ref{PD} and the Column player believes (remember that the transfers are simultaneous) that the game can be updated into the one of Figure \ref{PDup} --- for instance by the Row player making the specified offers --- then he can cancel all the offers out while offering a new transfer $t_{CD}\geq 1$. In this way he will induce the play of $CD$ (it is a Nash equilibrium), which is again inefficient (in the sense that a better outcome could be reached for both players by making different offers, notice that an outcome is Pareto optimal in the big game if it is a redistribution of payoffs of a maximal Pareto optimal outcome of the original game).

\medskip

Ellingsen and Paltseva generalize Jackson and Wilkie's work as follows:

%\begin{itemize}
%\itemsep = -2pt
%\item 
\smallskip
\indent $\triangleright$  Transfers from a player $A$ to a player $B$ are again of the form $A \xlongrightarrow{\delta \slash  \sigma} B$ where  $\sigma \in \prod_{i \in N} \Sigma_i$, but now $\delta \in \mathbb{R}$ and $\delta=0$ whenever $A=B$, i.e. players are allowed to propose both rewards and punishments contingent upon entire strategy profiles. This boils down to players not only making offers but also proposing contracts to the other players to sign or reject.

%\item 
\smallskip
\indent $\triangleright$  The game played is composed of three stages: (i) the one in which players propose contracts, (ii) the one in which players decide whether to sign a contract, (iii) the one in which players play the game updated by the signed contract.

%\item 
\smallskip
\indent $\triangleright$  Contracts are proposed on mix strategies, and non-deterministic contracts are considered, i.e. it is possible to make randomize offers. 
%\end{itemize}

While in \cite{JW05} each player $A$ specifies the nonnegative transfer to the other players for each pure strategy profile $\sigma$, in \cite{EP11} each player specifies a (possibly negative) transfer to the other players  for each (possibly mixed) strategy profile $\sigma$ and, at the same time, specifies a signing decision for each contract of the other players. Ellingsen and Paltseva show that their more general contracting game always has efficient equilibria. In particular they show that all the efficient outcomes guaranteeing to each player at least as much as the worst Nash-equilibrium payoff in the original game can be attained in some equilibrium.

\cite{yamada-contracts} considers variants of the games  in \cite{JW05} where one player moves before the other and the move of the second ends the preplay phase, showing a clear advantage of the latter player in improving is own payoff. In particular, Yamada shows that: 
\begin{itemize} 
\itemsep = -2pt
\item the second player can always increase his original payoff, i.e. the payoff he gets in the starting game, in every surviving Nash equilibrium %\vnote{Discuss} 

\item every surviving Nash equilibrium that is also maximally Pareto optimal gives the second player at least his original payoff
\end{itemize}

Clearly Yamada's framework is a step closer to ours than Jackson and Wilkie's. However the games analyzed there are a rather restricted sort of Stackelberg games, where the second player behaves like a dictator: not only can he best respond to the first player, but he can unilaterally decide that the game will end with him improving his original payoff. 

All in all, the message conveyed by this stream of contributions is that efficiency can be reached if the structure of players' offers is  complex enough. 
On the one hand Jackson and Wilkie show that promises are not enough to attain efficient outcomes, while Ellingsen and Paltseva show that contracting is. Possibly only Yamada's framework acknowledges that the structure of the game might influence the preplay phase. Our results lie on a rather different axis, as we restrict the type of offers to ones that only commit the proposer, not the recipient, and focus on the effects that additional factors in the preplay negotiation game, e.g. value of time, conditional offers and withdrawals, have on attaining outcomes with desirable properties, such as efficiency and fairness. We also discuss how equilbirium strategies themselves display desirable properties, i.e. being efficient negotiation strategies.

\section{Further agenda and concluding remarks}
\label{sec:conc}

The main purpose of the present paper is to initiate a systematic study of preplay negotiations in non-cooperative games, and to outline a broad and long-term research agenda for that study. We have indicated a number of conceptual and technical problems and have only sketched some results, but still much work needs to be done.
In particular, we identify two natural and important directions of current and future extensions of our framework:

%\begin{description}
%\begin{itemize}
%\item 
\medskip
\indent  
\textbf{Coalitional offers.} The analysis of $N$-player normal form games with preplay negotiations phase, for $N > 2$, is much more complicated than the 2-players case. To begin with, the benefit for a player $A$ of player $B$ playing a strategy induced by an offer from $A$ to $B$ crucially depend on the strategies that the remaining players choose to play, so an offer from a player to another player does not have the clear effect that it has in the 2-player case. Thus, 
a player may have to make a  \emph{collective offer} to several (possibly {all}) other players in order to orchestrate their plays in the best possible for him way. Furthermore, a player may be able to benefit in different ways by making offers for side payments to different players or groups of players, and the accumulated benefit from these different offers may or may not be worth the total price paid for it. Lastly, when all players make their offers pursuing their individual interests only, the total effect may be completely unpredictable, or even detrimental for all players. It is therefore natural that groups of players get to collaborate in coordinating their offers. Thus, a coalitional behaviour naturally emerges here, and the preplay negotiation phase incorporates playing a coalitional game to determine the partition of all players into coalitions that will coordinate their offers in the negotiation phase. However, we emphasize again that the transformed normal form game played after the preplay negotiation phase should remain a non-cooperative game where every player eventually plays for himself.  

%\item 
\medskip
\indent 
%$\triangleright$ 
\textbf{Inter-play offers in extensive form games.} The problem of underperformance is not limited to normal form games, where players cannot observe the outcome of the opponents' actions during the play. It also arises in some extensive form games, such as the Centipede game, where the Backward Induction strategy profile can prescribe to players an utterly inefficient solution. The idea of preplay offers of payments to other players can be applied quite effectively in extensive form games by means of \emph{inter-play offers}, where, before every move of a player, the other player(s) can make him individual or coalitional offers conditional on his forthcoming move. The players from both sides can consider these offers through some commonly accepted solution concept, e.g. Backward Induction, which would provide current values for each player of every  subgame arising after the possible moves of $A$. 
%\end{description} 

In conclusion, the focal problems of the study initiated here are to: 

\begin{itemize}
\itemsep = -2pt
\item  analyze the game-theoretic effects of preplay/interplay offers for payments between individual players and coalitions in strategic and extensive form games, with complete and incomplete information;
\item develop the theory of preplay negotiations and, in particular, the concept of efficient negotiations under various assumptions considered here;  
\item analyze the optimality and efficiency of the solutions that can be achieved in preplay negotiation games; 
\item expand the study into a systematic theory of cooperation through negotiations in non-cooperative games. 
\item apply the developed theory and the obtained results both descriptively and prescriptively to real-life scenarios where our framework applies. 
\end{itemize}

\bibliographystyle{alpha}
\bibliography{Preplay-Bibliography}

\end{document}